\let\NAT@parse\undefined\makeatother 
\def\BibTeX{{\rm B\kern-.05em{\sc i\kern-.025em b}\kern-.08em
    T\kern-.1667em\lower.7ex\hbox{E}\kern-.125emX}}
\DeclareRobustCommand{\munderbar}[1]{ \underaccent{\bar}{#1}}
\renewcommand{\vec}{\bm}
\newtheorem{thm}{Theorem}[section]
\newtheorem{corollary}{Corollary}[thm]
\newtheorem{prop}[thm]{Proposition}
\theoremstyle{definition}
\theoremstyle{remark}
\def\operator@font{\sf}\makeatother
\title{Insect-inspired Visually-guided Decentralized Swarming}
\author{Mehdi Yadipour, and Imraan A. Faruque
\thanks{M.~Yadipour is with Oklahoma State University, Stillwater, OK 74078 USA (e-mail: mehdi.yadipour@okstate.edu). }
\thanks{I.~A.~Faruque is with Oklahoma State University, Stillwater, OK 74078 USA (e-mail: i.faruque@okstate.edu).}
\thanks{This work was supported in part by ONR Young Investigator Award N00014-19-1-2216. Copyright by the authors June 2022; this work may be under consideration for publication and copyright may be transferred without notice, after which this version may no longer be accessible.}
}
\begin{document}
\maketitle

\begin{abstract}This paper addresses the need for fast, lightweight, vision-guided swarming under limited computation and no explicit communication network or position source. The study develops a multi-agent optic flow sensing framework, then integrates ``perfect information'' distributed feedback with optic flow sensing to create an analogous visually-guided feedback path for idealized inter-agent velocity and distance structures. The Cucker-Smale flocking example is used to develop vision-guided swarming with rigorous asymptotic convergence guarantees, including under ignorance of agent size.\end{abstract}
\section{Introduction}\label{s:Introduction}
Achieving efficient sensing and feedback paths to support aerial swarming has been a persistent challenge to aerospace robotics and motion control. Insects are an example of biological systems able to achieve robust inflight coordination with limited sensing and limited computational resources.  The dominance of visual feedback in such animals shows a bias towards visual interactions \cite{parsons2010sensorFusion,taylor2007whatMeasureAndWhy}, and visual feedback signals have been shown to be important for their solitary \cite{humbert2009visuomotorConvergence} and group navigation \cite{billah2020multiAgentWfi}.

Nonetheless, there remains a disconnect between the understanding of insect sensing and feedback rules and swarming theory. Many swarming tools rely on relatively large amounts of information (instantaneous position and velocity of all other agents), often shared through an explicit communication network. Restricting information flow significantly affects algorithm design and performance \cite{passino2006InformationFlow,bulent2015partialInfoNashEqb}, a challenge exacerbated in fast-moving dynamic systems like high speed aerial vehicles. Moreover, many continuous time swarm motion models lead to coordinated group motion having relatively high polarization levels, more analogous to bird flocking and fish schooling. In this regard, the apparently more chaotic motions seen in crowded assemblies of flying insects are not fully explained or translated to control theoretic framework.

This paper introduces a novel visually-guided feedback law consistent with insect visual sensing structures that is shown rigorously to converge to coordinated group motion. The visual feedback rule is shown to require no explicit communication network or information sharing and require no explicit position reference. Further proofs and illustrative simulations show that it is robust to ignorance of agent size, measurement noise, and generates a more chaotic swarming motion than the seminal Cucker-Smale swarm theoretic formulation.

The main contribution of this paper is to show (by construction) that  a concise output feedback on optic flow, assisted by heading rate and quadrant-level heading knowledge, is sufficient to produce coordinated motion. The result is is flexible to agent number and size.

\section{Background and previous work}
\label{s:Background and previous work}
\subsection{Swarm theory and proofs}
Swarm theory has benefited from considerable academic attention, and a balanced review is beyond the scope of this paper; instead the reader is referred to \citet{chung2018SwarmSurveyTac}, \citet{rossi2018review}, and \citet{olfatisaber2006Review}. This treatment will focus on a particular structure that will be shown to be amenable to rigorous visually guided swarming in aerial vehicles without communication networks, especially for individual models implementing onboard distributed feedback. This structure comprises an individual feedback law where an agent updates its velocity $v_i$ as a function of inter-agent relative velocities ${v}$ and distances $r$, (i.e., in the form $\dot{v}_i=f(v,r)$). Several of the most well-received flocking models may be written in this form, e.g., \citet{reynolds1987flocksBoids,vicsek1995swarm}, and \citet{cucker2007TacEmergent}. We will refer to these as $(v,r)$ models to denote their dependence on velocity and radial distance and use the Cucker-Smale formulation as a canonical example.

\citet{cucker2007TacEmergent} developed a dynamic model representing individual agent dynamics in a group of flocking agents such as birds. The dynamics describe an agent $i$'s velocity $v_i$ as evolving according to relative velocity and relative distance between agent $i$ and other agents of the group.
The primary result is a theorem showing the agent velocities converge to a common magnitude, or flocking. The C-S model's resemblance to Newtonian physics and rigorous asymptotic convergence proof inspired a large number of variants \cite{Choi2017csModelAndVariants}.

While foundational, the original Cucker-Smale framework assumes perfect information, i.e., that agents have access to noise free measurements of all agents' positions and velocities, which is not achievable in real systems having measurement noise. \citet{Cucker2008mordeckiNoise} initially incorporated additive and multiplicative noise on velocity, however most subsequent treatments use stochastic analysis \citep{Jabin2017meanFieldStochastic} and Gaussian/Wiener process frameworks \citep{Park1981wienerProcess} to focus on multiplicative velocity noise only. 
Progress towards visually-guided flocking has generally used vision in concert with other sensors \citep{2019yazheVisionAidedFlocking}. Early work on flocking ground robots used overhead motion tracking systems as a sensing proxy to test visually guided algorithms \citep{moshtagh2007flockingTac}, and subsequent work showed the sensitivity of visually-guided flocking via an individualistic framework to optical and occlusal characteristics \citep{dachner2018humanVisualCoupling}, which affect swarm performance \citep{asadi2016limitedFovSwarm,soria2019LatVisionFlocking}.
Sensor modality treatments suggest the modality that best supports the high bandwidth relative navigation needed for neighbor-relative flight motions are their visual domain sensors \cite{taylor2007whatMeasureAndWhy}, largely composed of compound eyes and rudimentary eyes (ocelli) providing relatively noisy information. These sensors show a dominance of sensitivity to optic flow, or visual motion blur,  \citep{serres2017opticFlowCollision} with specialized neurons (e.g., lobula plate tangential cells) responsive to optic flow patterns \cite{krapp1996ofMeasurement} (defined as $\dot{Q}$ later in this paper). Insects also demonstrate an ability to resolve small targets \citep{nordstrom2009stmdHypercomplex,nordstrom2012stmdsNeuralSpecializations} and to make dynamic computations on these small targets \citep{wiederman2017predictiveFocusGainModulation}. Futhermore, insects have demonstrated an internal compass, likely supported by their toroid-shaped ellipsoid bodies and protocerebral bridge structures \citep{seelig2015neuralStructureCompass}, that encodes direction from solar/lunar signals to wind \cite{zittrell2020PolarizationSunCompassLocust,dacke2021insectDrosophilaCompass}. These sensory structures and their engineering analogues have a rich history of performance in solitary environments on both insects and robots \cite{srinivasan1992beesOpticFlow,escobar2019smallObjectAvoidance,humbert2009visuomotorConvergence}, and a growing understanding of performance in multi-agent contexts, where they can support coordinated group motions \citep{billah2020multiAgentWfi,moshtaghcross,billah2022stmdMultiAgent}.

\subsection{This paper} 
Despite the progress in swarm theory, visually guided flight, and insect work, the systems and control field does not yet include a visual sensing analogue of the C-S result, or for more general $(v,r)$ models, which we use to indicate a dynamic model in which individual agent states states are updated by velocity ($v$) and distance ($r$) terms. The primary contribution of this work is to develop the theory to enable visually-guided flocking by integrating a visual navigation framework with the $(v,r)$ flocking model, using Cucker-Smale flocking as an example $(v,r)$ model. We consider it plausible that an insect (or robot), in general an agent, can measure optic flow field surrounding it and heading information, perhaps with some noise. We do not presume they possess an inflight communication network with sufficient bandwidth to coordinate flight or any position reference.

A second contribution of this work is to provide a mechanism for a multi-agent system to utilize the optic flow fields perceived by the agents and achieve speed consensus, in addition to orientation consensus (e.g., see \citep{moshtagh2007flockingTac}). As an example, the optic flow field perceived by a bee from a group of bees flying inside a tunnel is computed and illustrated as a heat map in Fig.~\ref{f:F_alpha}. Signals corresponding to a given neighbor agent have comparable magnitudes, suggesting that insects with limited neural capacities and thus unable to support engineered feature extraction may be able to use the number of signals having approximately equal magnitudes as an approach to estimating figure size. This view of how optic flow arises from surrounding agent motions supports the idea that a given region's optic flow signals serve to quantify relative distance (through the number of comparable magnitude signals) and relative speed (through the magnitude itself.

\begin{figure}[htbp]\centering
\includegraphics[width=0.35\textwidth]{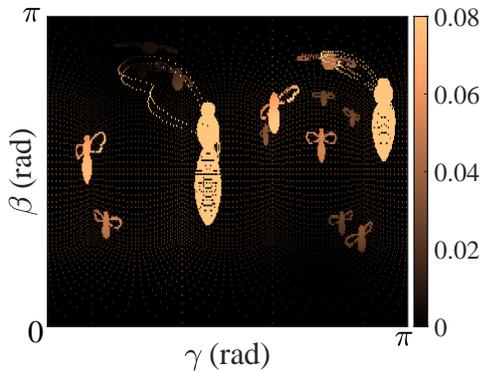}   
\caption{Optic flow field visualization: A bee (viewer) is assumed to be flying together with a group of bees. The optic flow signals received by the viewer at its eyes' image plane are computed in the agent's body frame using azimuth-elevation coordinates  at one degree resolution; color indicates the magnitude of these optic flow signals.}\label{f:F_alpha}\end{figure}

\section{Formulation \& Analysis}
In this section, we develop a formulation of dynamic agents connected by only optic flow signals, and show that it leads to velocity (speed and heading) convergence, and is well-behaved under varying agent size and measurement noise.

\subsection{Formulation}
The number of pixels on a sensor's image plane or the number of optic flow receptive neurons in the third optic ganglion (lobula plate) of a flying insect's eye is bounded, hence we assume that the optic flow field is a finite and countable set.
Consider the two arbitrary agents $i$ and $j$ illustrated in Fig.~\ref{f:geometry_1}, which both belong to a group of $N$ agents. We assume the motions of all $N$ agents are restricted to be on an inertial ($x-y$) plane and denote the distance between agent $i$ and $j$ as $r_{ij}$. Without loss of generality, denote as $L$ the agent (insect) semi-length or semi-height (a later theorem will show no loss of generality). 
Let $n_{ij}$ denote the number of optic flow signals $\dot{Q}_{ij}$ received by agent $i$ due to relative motion between agents $i$ and $j$. This number is proportional to $r_{ij}$ and its time derivative $\dot{n}_{ij}$ proportional to the rate of distance change between agents $\dot{r}_{ij}$. The numbers $n_{ij}$ and $\dot{n}_{ij}$ can then be used as a proxy for $r_{ij}$ and $\dot{r}_{ij}$, respectively. Additionally, the number $n_{ij}$ is proportional to the angle $\alpha_{ij}$ Fig.~\ref{f:geometry_1}. Therefore, without loss of generality, $\alpha_{ij}$ and $\dot{\alpha}_{ij}$ will be used as a measure of $n_{ij}$ and $\dot{n}_{ij}$,  respectively.  

\begin{figure}[htbp]\centering
\includegraphics[width=0.35\textwidth]{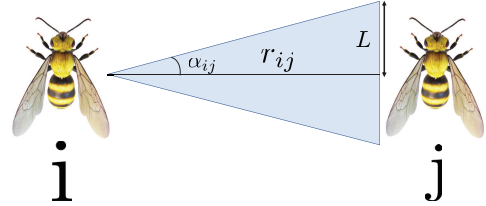}   
\caption{The agent $j$ of size $L$ located at distance $r_{ij}$ from agent $i$ is one of the contributors to the optic flow signals received by agent $i$. Angle $\alpha_{ij}$ can be used as a measure of number of optic flow signals received by agent $i$ due to relative motion between agents $i$ and $j$.}
\label{f:geometry_1}
\end{figure}

Introducing a (non-holonomic) no-sideslip dynamic model for each agent, the individual agent dynamics are given as
\begin{equation}\label{e:Agent_Dynamics}
\begin{split}
\dot{v}_i&=u^v_i\\
\dot{y}_i&=v_i\sin{\theta}_i\\
\dot{x}_i&=v_i\cos{\theta}_i\\
\dot{\theta}_i&=\omega_i\\
\dot{\omega}_i&=u^{\omega}_i,\end{split}\end{equation}
in which orientation $\theta_i$, speed $v_i$, and coordinates $x$ and $y$ are the states of agent $i$ with respect to inertial frame. The control inputs $u^i_{\theta}$ and $u^i_v$ act on the agent's orientation and speed, respectively. 

The foundational flocking result found in \citet{cucker2007TacEmergent} will be used here to represent the ``ideal sensing" limit, or the idealized flocking when all other agents instantaneous positions and velocities are known with no measurement noise or delay. 
\begin{thm} 
Cucker-Smale flocking postulates that an agent $i$ updates its velocity $\vec{v}_i$ while in a group of $N$ agents using the rule
\begin{equation}\label{e:Continuous_CS_Dynamics}
\dot{\vec{v}}_i=\sum_{j=1}^N \frac{H(\vec{v_j}-\vec{v_i})}{\left[\sigma^2+r_{ij}^2\right]^{\beta}},\end{equation}
where $H$, $\sigma$, and $\beta$ are positive constants and $r_{ij}=\lVert \vec{r_j}-\vec{r_i}\rVert$ is the distance between agent $i$ and $j$ of the group. Then, for $\beta<\frac{1}{2}$, when $t\rightarrow \infty$ the velocities $v_i(t)$ tend to a common limit $\hat{v}\in\mathrm{E}^3$ and the vectors $x_i-x_j$ tend to a limit vector $\hat{x}_{ij}$, for all $i,j\leq k$. The same holds for $\beta\geq\frac{1}{2}$ provided the initial conditions (initial position $x(t=0)$ and velocity $v(t=0)$) satisfy an additional constraint.
 \label{l:csThm} \end{thm}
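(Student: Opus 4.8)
The plan is to follow the energy/Lyapunov approach of the original Cucker--Smale analysis, recast through the weighted graph Laplacian. First I would write the coupling weights as $a_{ij}(t)=H/(\sigma^2+r_{ij}^2)^{\beta}$, so that \eqref{e:Continuous_CS_Dynamics} reads $\dot{\vec v}_i=\sum_j a_{ij}(\vec v_j-\vec v_i)=-(L(t)\vec v)_i$, where $L(t)$ is the symmetric positive-semidefinite Laplacian of the complete interaction graph. Because the summand is antisymmetric under $i\leftrightarrow j$, the mean velocity $\bar{\vec v}=\frac1N\sum_i\vec v_i$ is conserved; I would therefore pass to the co-moving frame and track only the fluctuations $\vec v^{\perp}=\vec v-\bar{\vec v}$.

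Next I would introduce the two scalar monitors $\Gamma(t)=\sum_i\lVert\vec v_i-\bar{\vec v}\rVert^2$ (velocity dispersion) and $\Lambda(t)=\sum_i\lVert\vec x_i-\bar{\vec x}\rVert^2$ (spatial spread), and derive a coupled pair of differential inequalities. Differentiating $\Gamma$ gives $\dot\Gamma=-2(\vec v^{\perp})^{\top}L\vec v^{\perp}\le-2\lambda_2(t)\,\Gamma$, where $\lambda_2(t)$ is the Fiedler eigenvalue; since every edge weight obeys $a_{ij}\ge H/(\sigma^2+D(t)^2)^{\beta}$ with $D(t)=\max_{ij}r_{ij}$, the complete-graph connectivity yields $\lambda_2(t)\ge NH/(\sigma^2+D(t)^2)^{\beta}$. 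A Cauchy--Schwarz estimate on $\dot\Lambda$ supplies the companion bound $\bigl|\tfrac{d}{dt}\sqrt\Lambda\bigr|\le\sqrt\Gamma$, and bounding $D(t)\le c\sqrt{\Lambda(t)}$ makes the decay rate of $\Gamma$ depend on the spread $\Lambda$ alone.

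The crux is closing this loop in the $(\sqrt\Lambda,\sqrt\Gamma)$ phase plane to show that $\Lambda$ stays bounded and $\Gamma\to0$. I would compare the trajectory against the barrier curve obtained by integrating $d\Gamma/d\Lambda$, using that for $\beta<\tfrac12$ the improper integral $\int^{\infty}(\sigma^2+s^2)^{-\beta}\,ds$ diverges. This divergence is precisely the statement that the cumulative dissipation available to $\Gamma$ is unbounded even as the agents drift apart, which forces $\sqrt\Gamma$ down to $0$ while $\sqrt\Lambda$ remains finite. Integrating $\dot\Gamma\le-2NH(\sigma^2+c^2\Lambda)^{-\beta}\Gamma$ over the now-bounded $\Lambda$ then gives exponential decay of $\Gamma$, hence $\vec v_i\to\hat v$; boundedness of $\Lambda$ simultaneously yields convergence of each $\vec x_i-\vec x_j$ to a limit $\hat x_{ij}$.

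The main obstacle is exactly this bounded-spread step: the feedback gain weakens as the flock expands, so one must prove the expansion self-limits before the coupling vanishes. For $\beta\ge\tfrac12$ the dissipation is no longer enough to guarantee self-limiting for arbitrary data, so I would instead impose the standard compatibility condition on $(\vec x(0),\vec v(0))$ --- smallness of the initial velocity dispersion relative to the initial spread --- ensuring the phase-plane trajectory starts inside the flocking basin, which recovers the conditional statement.
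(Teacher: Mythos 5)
You should know up front that the paper does not prove this theorem internally: its entire proof is the one-line observation that Eqn.~\eqref{e:Continuous_CS_Dynamics} is the Cucker--Smale dynamics written in $(v,r)$ form, so the result ``is established by'' the cited Cucker--Smale reference --- the theorem is imported background, not a contribution of the paper. Your proposal therefore does something different in form: it reconstructs the argument inside that citation, and the reconstruction is correct and is essentially the standard one, i.e., the proof the paper delegates to. Specifically, your ingredients --- the weighted-Laplacian form $\dot{\vec v}=-L(t)\vec v$, conservation of the mean velocity, the dispersion functionals $\Gamma$ and $\Lambda$, the Fiedler-value bound $\lambda_2(t)\ge NH/(\sigma^2+D(t)^2)^{\beta}$ via Laplacian monotonicity on the complete graph, the coupled inequalities $\bigl|\tfrac{d}{dt}\sqrt{\Lambda}\bigr|\le\sqrt{\Gamma}$ and $\dot{\Gamma}\le-2\lambda_2\Gamma$, and the phase-plane/barrier closure in which divergence of $\int^{\infty}(\sigma^2+s^2)^{-\beta}\,ds$ for $\beta<\tfrac{1}{2}$ forces the spread to self-limit before the coupling degrades --- constitute the Cucker--Smale/Ha--Liu dissipative-differential-inequality scheme, with the crux (bounded spread) correctly identified. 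Two refinements if you were to write it out in full: convergence of $\vec x_i-\vec x_j$ does not follow from boundedness of $\Lambda$ alone (a bounded quantity need not converge) but from the exponential decay of $\Gamma$ you establish, which makes $\vec v_i-\vec v_j$ absolutely integrable in time so that $\vec x_i-\vec x_j$ has a limit; and for $\beta\ge\tfrac{1}{2}$ the sharp admissibility condition is not generic ``smallness'' but that the initial velocity dispersion lie below the remaining dissipation capacity, $\sqrt{\Gamma(0)}<\int_{c\sqrt{\Lambda(0)}}^{\infty}NH(\sigma^2+s^2)^{-\beta}\,ds$ up to the constant $c$ relating $D$ to $\sqrt{\Lambda}$, which is the precise form of the ``flocking basin'' you invoke.
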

\begin{proof}Eqn.~\eqref{e:Continuous_CS_Dynamics} rewrites the Cucker Smale dynamics \citep{cucker2007TacEmergent} in $(v,r)$ form, thus Thm.~\ref{l:csThm} is established by \citet{cucker2007TacEmergent}.\end{proof}

\subsubsection{Multi-agent optic flow-feedback}
We now introduce the multi-agent optic flow feedback model and its corresponding convergence theorem as the main theorem of this paper. Geometry is based on frames and angle definitions in Fig.~\ref{f:geometry_2} 
\begin{thm}[YFM feedback] \label{t:OfFlockingThm}
Consider a collection of agents in which each agent $i$ updates its speed $v_i$ and heading $\theta_i$ using feedback laws based on optic flow regions $Q_{ij}$ induced by neighboring agents $j$ using the following two update rules.

Suppose agents $i=1,...N$ update their speeds using
\begin{equation} \dot{v}_i\!=\!\tiny HL\sum_{j=1}^N\normalsize\frac{\mp\dot{\alpha}_{ij}(1\!+\!\cot^2{\alpha_{ij})}\cos\gamma_{ij}-(\dot{Q}_{ij}\!+\!\dot{\theta_i})\cot\alpha_{ij}\sin\gamma_{ij}}{\left(1+L^2\cot^2\alpha_{ij}\right)^{\beta}}
\label{e:vdot}
\end{equation} 
and update their orientations using 
\begin{equation}\begin{split}
\ddot{\theta}_i=-k\dot{\theta}_i+\hspace{.4\textwidth}\\
\frac{kHL}{v_i}\sum_{j=1}^N\frac{\mp\dot{\alpha}_{ij}(1\!+\!\cot^2{\alpha_{ij})}\sin\gamma_{ij}+\left(\dot{Q}_{ij}\!+\!\dot{\theta}_i\right)\cot\alpha_{ij}\cos\gamma_{ij}}{\left(1+L^2\cot^2\alpha_{ij}\right)^{\beta}},\end{split}
\label{e:thetaDot}
\end{equation}
where $k>0$ denotes is a feedback gain, $H$ is a positive constant, $\gamma_{ij}\in\left[-\pi,\pi\right]$ is the observation angle of target $j$ with respect to a frame affixed to agent $i$ as shown in Fig.~\ref{f:geometry_2}, 
and the sign $\mp$ updates as \[\mp= \left\{\begin{array}{ll}
 -, &\mathrm{if~} \lvert\theta_i+\gamma_{ij}\rvert \in [0,\pi/2]\\
 +, &\mathrm{if~} \lvert\theta_i+\gamma_{ij}\rvert \notin [0,\pi/2]\,.\end{array}\right. \]
 
Then, for this collection of agents, the speeds $v_i$ and headings $\theta_i$ both converge asymptotically to common values if $\beta<\frac{1}{2}$.  If $\beta\geq\frac{1}{2}$, convergence depends on initial conditions of the agents.\label{mainOfThm}
 \end{thm}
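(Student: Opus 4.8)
The plan is to show that the two optic-flow feedback laws \eqref{e:vdot}--\eqref{e:thetaDot} are nothing more than a reparametrization of the Cucker--Smale acceleration \eqref{e:Continuous_CS_Dynamics}, written in the nonholonomic speed/heading coordinates of \eqref{e:Agent_Dynamics}, after which convergence follows immediately from Thm.~\ref{l:csThm}. The first task is therefore to convert every optic-flow quantity into the corresponding $(v,r)$ quantity appearing in \eqref{e:Continuous_CS_Dynamics}.

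From the geometry of Fig.~\ref{f:geometry_1}, an agent of semi-size $L$ at range $r_{ij}$ subtends an angle $\alpha_{ij}$ satisfying $\tan\alpha_{ij}=L/r_{ij}$, i.e. $r_{ij}=L\cot\alpha_{ij}$. This single identity does two things at once: it turns the feedback denominator into $(1+L^2\cot^2\alpha_{ij})^{\beta}=(1+r_{ij}^2)^{\beta}$, matching the Cucker--Smale kernel with $\sigma=1$, and on differentiation gives $\dot r_{ij}=-L\,(1+\cot^2\alpha_{ij})\,\dot\alpha_{ij}$, which supplies the $\dot\alpha_{ij}(1+\cot^2\alpha_{ij})$ terms (the $\mp$ and the quadrant test on $\theta_i+\gamma_{ij}$ just restore the sign of this radial rate, which a magnitude-only optic-flow signal cannot resolve). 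The second identity I need is that the raw optic flow $\dot Q_{ij}$ is the body-frame bearing rate $\dot\gamma_{ij}$, so the inertial line-of-sight rate is $\dot\phi_{ij}=\dot\gamma_{ij}+\dot\theta_i=\dot Q_{ij}+\dot\theta_i$; this is precisely where the required heading-rate knowledge enters the laws.

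With these substitutions the relative velocity $\vec v_j-\vec v_i$ decomposes along and across the line of sight into a radial part $\dot r_{ij}$ and a tangential part $r_{ij}\dot\phi_{ij}=L\cot\alpha_{ij}(\dot Q_{ij}+\dot\theta_i)$. Resolving these two components onto the agent's heading-aligned (forward) and lateral body axes through the bearing $\gamma_{ij}$, I would verify by direct computation that the numerator of \eqref{e:vdot} is exactly the forward component of $\vec v_j-\vec v_i$ divided by $L$, and the numerator of the sum in \eqref{e:thetaDot} is exactly the lateral component divided by $L$. Since for the nonholonomic model $\dot{\vec v}_i=\dot v_i\,(\cos\theta_i,\sin\theta_i)+v_i\dot\theta_i\,(-\sin\theta_i,\cos\theta_i)$, the heading-aligned projection of the Cucker--Smale sum is $\dot v_i$ and the lateral projection is $v_i\dot\theta_i$. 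Equation \eqref{e:vdot} thus reproduces the Cucker--Smale speed equation verbatim, while the bracketed sum in \eqref{e:thetaDot} equals $v_i$ times the Cucker--Smale-prescribed turn rate $\dot\theta_i^{\star}$.

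The one genuinely nonroutine step is the heading channel, which is second order ($\dot\theta_i=\omega_i,\ \dot\omega_i=u^\omega_i$) whereas Cucker--Smale prescribes only a first-order turn rate. Writing \eqref{e:thetaDot} as $\ddot\theta_i=-k(\dot\theta_i-\dot\theta_i^{\star})$ exhibits it as a stable first-order tracker driving the heading-rate error $\dot\theta_i-\dot\theta_i^{\star}$ to zero at rate $k$. I would close the argument with a singular-perturbation / cascade estimate: because $\dot\theta_i^{\star}$ is a bounded, slowly varying function of the aggregate state, the tracking error decays and the heading loop converges to the Cucker--Smale vector field, so the full optic-flow system is an asymptotically exact realization of \eqref{e:Continuous_CS_Dynamics}. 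Invoking Thm.~\ref{l:csThm} then yields speed and heading consensus for $\beta<\tfrac12$ unconditionally and for $\beta\ge\tfrac12$ under the stated initial-condition restriction. I expect the main obstacle to be making this cascade rigorous, in particular ensuring that the transient heading-rate error does not destroy the Cucker--Smale decay estimate and that $v_i$ stays bounded away from zero so that the $1/v_i$ gain in \eqref{e:thetaDot} remains well defined.
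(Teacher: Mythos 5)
Your proposal follows essentially the same route as the paper's proof: the same identities $r_{ij}=L\cot\alpha_{ij}$, $\dot r_{ij}=-L(1+\cot^2\alpha_{ij})\dot\alpha_{ij}$, and $\dot Q_{ij}+\dot\theta_i$ equal to the line-of-sight rate are used to rewrite the Cucker--Smale acceleration as the two optic-flow laws, with the speed law imposed exactly and the heading law realized as the tracker $\ddot\theta_i=-k(\dot\theta_i-\dot\theta_i^{*})$, after which Thm.~\ref{l:csThm} is invoked (your direct projection onto body axes is just the paper's inertial-frame detour collapsed via $\phi_{ij}-\theta_i=\gamma_{ij}$). If anything, you are more candid than the paper about the remaining work: the paper simply defines the heading tracker and stops, whereas you correctly flag that the second-order heading channel only tracks $\dot\theta_i^{*}$ asymptotically and that $v_i$ must stay bounded away from zero for the $1/v_i$ gain---a cascade argument that neither you nor the paper actually carries out.
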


\begin{figure}[h]\centering
\includegraphics[width=0.5\textwidth]{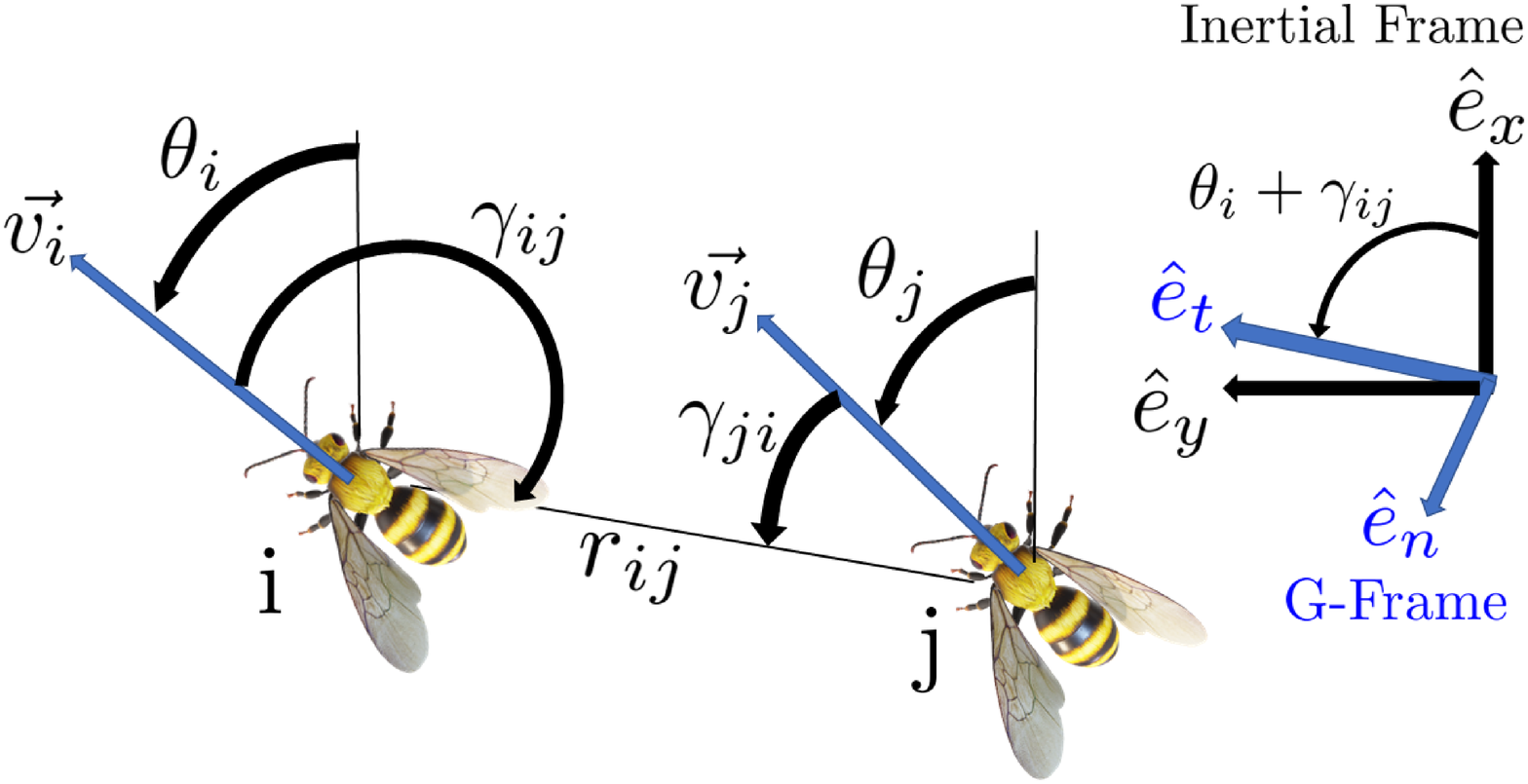}     
\caption{Geometric variables to calculate the optic flow signal between agents $i$ and $j$ located at a relative distance $r_{ij}$ are shown. These agents are moving on a plane with velocities $\vec{v_{[\:]}}$ at directions $\theta_{[\:]}$ and viewing each other at angles $\gamma_{[\:,\:]}$. Angles are defined in $[-\pi,\pi]$.}
\label{f:geometry_2}  
\end{figure}

Note that the velocity of other agents and distance to other agents are unknown to agent $i$, and the only known information to the agent $i$ is the optic flow field produced by relative motion between agents and heading information.  

\begin{proof}
Generally, the proof consists of (a) writing C-S dynamics in speed/direction from, (b) re-writing optic flow in a new ``sin-sin" form, (c) writing the relative velocity in observation frame, (d) writing relative velocity in terms of visual signals, (e) transforming from body frame into inertial frame, (f) obtaining desired values for speed and orientation update rules, and (g) designing controllers using the desired values.

\paragraph{C-S model in speed/direction form} 
Updating the agent's velocity vector $\vec{v}_i$ by its derivative $\dot{\vec{v}}_i$ may also be written as updating the agent's speed $v_i$ and travel direction $\theta_i$ via their time derivatives $\dot{v}_i$ and $\dot{\theta}_i$. If we write the velocity of agent $i$ as $\vec{v}_i=v^i_{x}\hat{e}_x+v^i_{y}\hat{e}_y$, and its derivative, obtained from C-S model, as $\dot{\vec{v}}_i=\dot{v}^i_x \hat{e}_x+\dot{v}^i_y \hat{e}_y$, then the time derivatives of the speed and orientation of agent $i$ can be calculated by
\begin{equation}\label{e:V_dot}
\dot{v}_i=\frac{v^i_{x}\dot{v}^i_{x}+v^i_{y}\dot{v}^i_{y}}{v_i}
\end{equation}
and 
\begin{equation}\label{e:Theta_dot}
\dot{\theta}_i=\frac{1}{1+\tan^2(\theta_i)}\frac{v^i_{x}\dot{v}^i_{y}-v^i_{y}\dot{v}^i_{x}}{v^2_x}.\end{equation}

In other words, if each agent $i$ updates its speed and orientations using Eqns.~\eqref{e:V_dot} and \eqref{e:Theta_dot}, then by the primary C-S theorem in Thm.~\ref{l:csThm}, convergence is also guaranteed. We denote the desired value of speed change $\dot{v}$ by $\dot{v}^*$ and the desired value of heading change $\dot{\theta}$ by $\dot{\theta^*}$.

\paragraph{Multi-agent optic flow in sin-sin form}
For the agents $i$ and $j$ from a group of $N$ agents let $\vec{v}_i$ and $\vec{v}_j$ be their velocity vectors, respectively. The optic flow $\dot{Q}_{ij}$ for agent $i$ due to relative motion between agents $i$ and $j$ is given by \cite{koenderink1987opticFlowFacts}
\begin{equation}\label{e:qdot}
\dot{Q}_{ij}=-\dot{\theta}_i+\frac{1}{r_{ij}}\left[v_i \sin(\gamma_{ij})-v_j \sin(\gamma_{ji})\right].
\end{equation}

\paragraph{Relative velocity in observation frame} Denote as $\gamma_{ij}$ the viewing angle of agent $i$ as seen by $j$, and $\gamma_{ji}$ to be the viewing angle) of $j$ as seen by $j$ WRT to $i$ as shown in Fig.~\ref{f:geometry_2}. 
Let G-frame $G=\{\hat{e}_t,\hat{e}_n\}$ be defined by two orthogonal unit vectors, $\hat{e}_t$ a unit vector along the $\gamma$ direction (aligned with the target), and $\hat{e}_n$, which is perpendicular to $\gamma$ direction. The relative velocity between agents $i$ and $j$ can be written as
\begin{equation}\label{e:deltaV}
\vec{v}_j-\vec{v}_i= \left(v_j \cos\gamma_{ji}-v_i \cos\gamma_{ij}\right)\hat{e}_t+\left(v_j \sin\gamma_{ji}-v_i \sin\gamma_{ij}\right)\hat{e}_n.\end{equation}
\paragraph{Rewrite velocity in visual signals} We proceed component wise through this expression.
The component acting along the normal vector $\hat{e}_n$ in Eqn.~\eqref{e:deltaV} can be rewritten using the optic flow equation Eqn.~\eqref{e:qdot} as
\begin{equation}\label{e:VjSin-ViSin}
v_j \sin\gamma_{ji}-v_i \sin\gamma_{ij}=-r_{ij}\left(\dot{Q}_{ij}+\dot{\theta}_i\right)\end{equation}

The component of velocity in the target direction $\hat{e}_t$ may be rewritten as a function of the expansion of the angle $\alpha_{ij}$. From Fig.~\ref{f:geometry_1},
\begin{equation}\label{e:rij-LcotAlpha}
r_{ij}=L \cot(\alpha_{ij}).\end{equation}

\noindent Let $\dot{r}_{ij}$ be the time derivative of the distance between agents $i$ and $j$. Then the velocity component along $\hat{e}_t$ in Eqn.~\eqref{e:deltaV} may be written as 
\[v_j \cos(\gamma_{ji})-v_i \cos(\gamma_{ij})=
\left \{ \begin{array}{c c}
 ~\dot{r}_{ij},   & \mathrm{if~} \lvert\theta_i+\gamma_{ij}\rvert \in [0,\pi/2] \\
-\dot{r}_{ij},   &  \mathrm{if~} \lvert\theta_i+\gamma_{ij}\rvert \notin [0,\pi/2].
\end{array} \right.\] From Eqn.~\eqref{e:rij-LcotAlpha} we have $\dot{r}_{ij}=-[1+\cot^2\alpha_{ij}]L\dot{\alpha}_{ij}$. So, the component of the relative velocity along $\hat{e}_t$ can be written as
\begin{equation}
\label{e:VjCos-ViCos}
v_j \cos(\gamma_{ji})-v_i\cos(\gamma_{ij})=
\end{equation}
\[
\left\{
\begin{array}{cl}
   -\left(1+\cot^2\alpha_{ij}\right)L\dot{\alpha}_{ij}  &\mathrm{if~} \lvert\theta_i+\gamma_{ij}\rvert \in [0,\pi/2] \\
   ~~\,\left(1+\cot^2\alpha_{ij}\right)L\dot{\alpha}_{ij} &\mathrm{if~} \lvert\theta_i+\gamma_{ij}\rvert \notin [0,\pi/2].
\end{array}\right.\]
Using Eqns.~\eqref{e:VjSin-ViSin} and \eqref{e:VjCos-ViCos}, we can rewrite the relative velocity between agents $i$ and $j$ in Eqn.~\eqref{e:deltaV} as
\begin{equation}\label{e:relativeVelocityAlphadotQdotForm}
\vec{v}_j-\vec{v}_i= \mp\left(1+\cot^2\alpha_{ij}\right)L\dot{\alpha}_{ij}\hat{e}_t-r_{ij}\left(\dot{Q}_{ij}+\dot{\theta}_i)\right)\hat{e}_n.
\end{equation}

\paragraph{Transform into inertial frame} Let $\hat{e}_x$ and $\hat{e}_y$ be the unit vectors of the inertial $x-y$ frame. The transformation of vectors between inertial frame vectors and $\gamma$-frame can be achieved using

\begin{equation} \label{e:Inertial_to_Gamma}
\begin{bmatrix}\hat{e}_t\\\hat{e}_n\end{bmatrix} = \begin{bmatrix}\cos(\gamma_{ij}+\theta_i) & \sin(\gamma_{ij}+\theta_i)\\ \sin(\gamma_{ij}+\theta_i) & -\cos(\gamma_{ij}+\theta_i) \end{bmatrix} \begin{bmatrix}\hat{e}_x\\ \hat{e}_y\end{bmatrix}.\end{equation}
Defining $\phi_{ij}=\gamma_{ij}+\theta_i$ for compact notation, the relative velocity may be written in inertial coordinates using Eqns.~\eqref{e:relativeVelocityAlphadotQdotForm} and \eqref{e:Inertial_to_Gamma} as
\begin{equation}\label{e:deltaV_in_Inertial}
\begin{split}
\vec{v}_j-&\vec{v}_i=\\
&\left(\mp(1+\cot^2\alpha_{ij})\cos\phi_{ij}L\dot{\alpha}_{ij}-r_{ij}(\dot{Q}_{ij}+\dot{\theta}_i)\sin\phi_{ij}\right) \hat{e}_x+\\
&\left(\mp(1+\cot^2\alpha_{ij})\sin\phi_{ij}L\dot{\alpha}_{ij}+r_{ij}(\dot{Q}_{ij}+\dot{\theta}_i)\cos\phi_{ij}\right) \hat{e}_y.\hspace{.4em}\end{split} \end{equation}

Equation \eqref{e:deltaV_in_Inertial} can now be used to quantify the relative velocities of the agents in inertial frame from optic flow $\dot{Q}_{ij}$ and angular expansion rate $\dot{\alpha}_{ij}$.

\paragraph{Idealized speed/orientation update rules} The desired value to update the velocity of agent $i$ is given by C-S model as 
\[\dot{\vec{v}}_i^* =\frac{H(\vec{v}_j-\vec{v}_i)}{(1+r^2_{ij})^{\beta}},\]
which using Eqn.~\eqref{e:rij-LcotAlpha} can be written as
\begin{equation}\label{e:dot_Vstar_1}
\dot{\vec{v}}_i^*=\frac{H(\vec{v}_j-\vec{v}_i)}{(1+L^2 \cot^2\alpha_{ij})^{\beta}}.
\end{equation}
$\dot{\vec{v}}^*_i$ can be written in inertial frame components as
\begin{equation}\label{e:dot_Vstar_2}
\dot{\vec{v}}_i^*=\dot{v}_x^{i*} \hat{e}_x+\dot{v}_y^{i*} \hat{e}_y.
\end{equation}
From Eqns.~\eqref{e:deltaV_in_Inertial}, \eqref{e:dot_Vstar_1}, and \eqref{e:dot_Vstar_2}, the components are drawn as
\begin{equation}\label{e:dot_vStar_x}
\begin{split}
\dot{v}_x^{i*}&\!=\!\\&H\frac{\mp(1+\cot^2\alpha_{ij})\cos(\phi_{ij})L\dot{\alpha}_{ij}-r_{ij}(\dot{Q}_{ij}+\dot{\theta}_i)\sin(\phi_{ij})}{(1+L^2 \cot^2\alpha_{ij})^{\beta}}\end{split}\end{equation}
and
\begin{equation}\label{e:dot_vStar_y}\begin{split}
\dot{v}_y^{i*}&\!=\!\\&H\frac{ \mp(1+\cot^2\alpha_{ij})\sin(\phi_{ij})L\dot{\alpha}_{ij}+r_{ij}(\dot{Q}_{ij}+\dot{\theta}_i)\cos(\phi_{ij})}{(1+L^2 \cot^2\alpha_{ij})^{\beta}}.
\end{split}\end{equation}
One can now use these two components together with Eqns.~\eqref{e:V_dot} and~\eqref{e:Theta_dot} to find the desired time derivatives of the agent $i$'s speed  and orientation as
\begin{equation}\label{e:dot_vStar_3}
\dot{v}_i^*=\frac{v_x^i\dot{v}_x^{i*}+v^i_{y}\dot{v}_y^{i*}}{v_i}\end{equation}
and 
\begin{equation}\label{e:dot_Theta_Star_3}
\dot{\theta}_i^*=\frac{1}{1+\tan^2(\theta_i)}\frac{v^i_{x}\dot{v}^i_{y}-v^i_{y}\dot{v}^i_{x}}{v^2_x}.
\end{equation}

\noindent By substitution of $v^i_x=v_i \cos(\theta_i)$, $v^i_y=v_i \sin(\theta_i)$, and $\dot{v}_x^{i*}$, $\dot{v}_y^{i*}$ from Eqn.~\eqref{e:dot_vStar_x} and \eqref{e:dot_vStar_y} in Eqns.~\eqref{e:dot_vStar_3} and~\eqref{e:dot_Theta_Star_3}, we obtain the desired values for time derivatives of speed and orientation of agent $i$ as
\begin{equation}\label{e:dot_vStar_4}\footnotesize
\begin{split}\hspace{-.25em}
\dot{v}_i^*&\!=\! H\cos\theta_i\!\left( \frac{\mp (1\!+\!\cot^2{\alpha_{ij})L\dot{\alpha}}\cos\phi_{ij}-L\cot\alpha_{ij}(\dot{Q}_{ij}\!+\!\dot{\theta_i})\sin\phi_{ij}}{(1+L^2\cot^2\alpha_{ij})^{\beta}}\right)\\
&+H\sin\theta_i\!\left( \frac{\mp(1\!+\!\cot^2{\alpha_{ij})L\dot{\alpha}}\sin\phi_{ij}+L\cot\alpha_{ij}(\dot{Q}_{ij}\!+\!\dot{\theta_i})\cos\phi_{ij}}{(1+L^2\cot^2\alpha_{ij})^{\beta}}\right)\end{split}\end{equation}

\begin{equation}\label{e:dot_Theta_Star_4}\small
\begin{split}\hspace{-.4em}
\dot{\theta}_i^*&\!=\! \frac{-H\sin\theta_i}{v_i}\!\left( \frac{\mp(1\!+\!\cot^2{\alpha_{ij})L\dot{\alpha}}\cos\phi_{ij}\!-\!L\cot\alpha_{ij}(\dot{Q}_{ij}\!+\!\dot{\theta_i})\sin\phi_{ij}}{(1+L^2\cot^2\alpha_{ij})^{\beta}}\right)\\
&\!+\!\frac{H\cos\theta_i}{v_i}\!\left( \frac{\mp(1\!+\!\cot^2{\alpha_{ij})L\dot{\alpha}}\sin\phi_{ij}+L\cot\alpha_{ij}(\dot{Q}_{ij}\!+\!\dot{\theta_i})\cos\phi_{ij}}{(1+L^2\cot^2\alpha_{ij})^{\beta}}\right).\end{split}\end{equation}
 When agent $i$ responds to a collection of $N$ agents (agent $i$ updates its speed and orientation using sum of the values obtained from Eqns.~\eqref{e:dot_vStar_4} and~\eqref{e:dot_Theta_Star_4} for $j=1,..,N$), then the speed and orientation update laws become
 
\begin{equation}\small\begin{split}
\dot{v}_i^*\!=\! H\cos\theta_i\sum_{j=1}^N\frac{\mp (1\!+\!\cot^2{\alpha_{ij})L\dot{\alpha}}\cos\phi_{ij}-L\cot\alpha_{ij}(\dot{Q}_{ij}\!+\!\dot{\theta_i})\sin\phi_{ij}}{(1+L^2\cot^2\alpha_{ij})^{\beta}}\\
+H\sin\theta_i\sum_{j=1}^N \frac{\mp(1\!+\!\cot^2{\alpha_{ij})L\dot{\alpha}}\sin\phi_{ij}+L\cot\alpha_{ij}(\dot{Q}_{ij}\!+\!\dot{\theta_i})\cos\phi_{ij}}{(1+L^2\cot^2\alpha_{ij})^{\beta}}
\end{split} \end{equation}
and
\begin{equation}\small\begin{split}\dot{\theta}_i^*&=\\ &\frac{-H\sin\theta_i}{v_i}\sum_{j=1}^N \frac{\mp(1\!+\!\cot^2{\alpha_{ij})L\dot{\alpha}}\cos\phi_{ij}-L\cot\alpha_{ij}(\dot{Q}_{ij}\!+\!\dot{\theta_i})\sin\phi_{ij}}{(1+L^2\cot^2\alpha_{ij})^{\beta}}\\
&+\!\frac{H\cos\theta_i}{v_i}\sum_{j=1}^N \frac{\mp(1\!+\!\cot^2{\alpha_{ij})L\dot{\alpha}}\sin\phi_{ij}+L\cot\alpha_{ij}(\dot{Q}_{ij}\!+\!\dot{\theta_i})\cos\phi_{ij}}{(1+L^2\cot^2\alpha_{ij})^{\beta}}.\end{split} \end{equation}

\noindent These equations can be simplified as
\begin{equation}\label{e:dot_vStar_5}
\dot{v}_i^*\!=\! HL\sum_{j=1}^N \frac{\mp\dot{\alpha}_{ij}(1\!+\!\cot^2{\alpha_{ij})}\cos\gamma_{ij}-(\dot{Q}_{ij}\!+\!\dot{\theta_i})\cot\alpha_{ij}\sin\gamma_{ij}}{(1+L^2\cot^2\alpha_{ij})^{\beta}}
\end{equation}

\begin{equation}\label{e:dot_Theta_Star_5}
\dot{\theta}^*_i\!=\! \frac{HL}{v_i}\sum_{j=1}^N\frac{\mp\dot{\alpha}_{ij}(1\!+\!\cot^2{\alpha_{ij})}\sin\gamma_{ij}+(\dot{Q}_{ij}\!+\!\dot{\theta}_i)\cot\alpha_{ij}\cos\gamma_{ij}}{(1+L^2\cot^2\alpha_{ij})^{\beta}}.\end{equation}

\paragraph{Final update rules for speed and orientation} Now we can use the desired value of $\dot{v}_i^*$ as in Eqn.~\eqref{e:dot_vStar_5} for speed control of agent $i$ in the form of $\dot{v}_i=\dot{v}_i^*$ or
\[\dot{v}_i\!=\! HL\sum_{j=1}^N \frac{\mp\dot{\alpha}_{ij}(1\!+\!\cot^2{\alpha_{ij})}\cos\gamma_{ij}-(\dot{Q}_{ij}\!+\!\dot{\theta_i})\cot\alpha_{ij}\sin\gamma_{ij}}{(1+L^2\cot^2\alpha_{ij})^{\beta}}\]
In order to regulate $\dot{\theta}_i$ to its desired value $\dot{\theta}_i^*$ given by Eqn.~\eqref{e:dot_Theta_Star_5}, we design a $\ddot{\theta}_i$ controller to be $u_{\omega}=-k(\dot{\theta}_i-\dot{\theta}^*_i)$ or
\begin{equation*}\begin{split}\hspace{-.25em} \ddot{\theta}_i&\!=\!-k\dot{\theta}_i\\&\hspace{-0.2em}+\!\tiny\frac{kHL}{v_i}\!\sum_{j=1}^N\frac{\normalsize\mp\dot{\alpha}_{ij}(1\!+\!\cot^2{\alpha_{ij}})\sin\gamma_{ij}\!+\!(\dot{Q}_{ij}\!+\!\dot{\theta}_i)\cot\alpha_{ij}\cos\gamma_{ij}}{(1+L^2\cot^2\alpha_{ij})^{\beta}},\end{split}\end{equation*}\normalsize
where $k>0$ denotes the feedback gain.\end{proof}

\subsection{Agent size $(L)$}\label{ss:agentSize}
The appearance of half-size $L$ in the feedback raises a discussion of agent size. In particular, does convergence require agents to know $L$?

Agents do not need to know the length $L$. The convergence proof holds when $L_e$ is replaced by any real positive constant $L_e>0$, as proven in the following corollary to Thm.~\ref{t:OfFlockingThm}.
\begin{corollary} In the case that $L$ is an unknown constant, replace it with an arbitrary real positive number $L_e>0$. Then Thm.~\ref{t:OfFlockingThm} holds.\end{corollary}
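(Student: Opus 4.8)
The plan is to show that substituting an arbitrary $L_e>0$ for the true half-size $L$ leaves the closed-loop velocity dynamics in Cucker--Smale form, merely with rescaled positive constants and the \emph{same} exponent $\beta$, so that Thm.~\ref{l:csThm} applies verbatim. First I would observe that every quantity the agent actually senses---the angular size $\alpha_{ij}$, its rate $\dot\alpha_{ij}$, the optic flow $\dot Q_{ij}$, the heading rate $\dot\theta_i$, and the bearing $\gamma_{ij}$---is a physical measurement independent of whatever numerical value is inserted for the half-size. The constant $L$ enters the feedback of Thm.~\ref{t:OfFlockingThm} only in two places: the coefficient multiplying the reconstructed relative velocity, and the term $L^2\cot^2\alpha_{ij}$ in the denominator (which originates from $r_{ij}=L\cot\alpha_{ij}$, Eqn.~\eqref{e:rij-LcotAlpha}).

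The crucial step is to track the effect of the substitution $L\mapsto L_e$ on the reconstructed relative velocity of Eqn.~\eqref{e:relativeVelocityAlphadotQdotForm}. Writing $\lambda:=L_e/L>0$, I would note that \emph{both} components there are linear in the half-size: the tangential component $\mp(1+\cot^2\alpha_{ij})L\dot\alpha_{ij}$ carries an explicit factor $L$, while the normal component $-r_{ij}(\dot Q_{ij}+\dot\theta_i)=-L\cot\alpha_{ij}(\dot Q_{ij}+\dot\theta_i)$ is linear in $L$ through $r_{ij}$. Hence replacing $L$ by $L_e$ multiplies the \emph{entire} reconstructed vector $\vec v_j-\vec v_i$ by the single scalar $\lambda$ with no directional distortion, and simultaneously turns the denominator into $(1+L_e^2\cot^2\alpha_{ij})^{\beta}=(1+\lambda^2 r_{ij}^2)^{\beta}$. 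This uniform scaling of both components by the \emph{same} factor is the point I expect to demand the most care, since a direction-dependent distortion would destroy the reduction to Cucker--Smale form.

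With this in hand the remainder is algebraic. The speed/heading laws (Eqns.~\eqref{e:vdot}--\eqref{e:thetaDot}) were assembled precisely so that $\dot{\vec v}_i=\dot{\vec v}_i^{*}$ with per-neighbor desired value $\dot{\vec v}_i^{*}=H(\vec v_j-\vec v_i)/(1+r_{ij}^2)^{\beta}$; under the substitution they instead realize
\[\dot{\vec v}_i=\sum_{j=1}^N\frac{H\lambda\,(\vec v_j-\vec v_i)}{\bigl(1+\lambda^2 r_{ij}^2\bigr)^{\beta}}.\]
Factoring $\bigl(1+\lambda^2 r_{ij}^2\bigr)^{\beta}=\lambda^{2\beta}\bigl(\lambda^{-2}+r_{ij}^2\bigr)^{\beta}$ rewrites this as
\[\dot{\vec v}_i=\sum_{j=1}^N\frac{H'\,(\vec v_j-\vec v_i)}{\bigl({\sigma'}^{2}+r_{ij}^2\bigr)^{\beta}},\qquad H':=H\lambda^{1-2\beta}>0,\quad {\sigma'}^{2}:=\lambda^{-2}=L^2/L_e^2>0,\]
which is exactly the Cucker--Smale system of Eqn.~\eqref{e:Continuous_CS_Dynamics} with positive constants $H',\sigma'$ and an \emph{unchanged} $\beta$. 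I would then invoke Thm.~\ref{l:csThm} directly: since only $H$ and $\sigma$ are altered (both remaining positive) while $\beta$ is preserved, the convergence dichotomy is identical---unconditional convergence of the speeds $v_i$ and headings $\theta_i$ to common values for $\beta<\tfrac12$, and convergence under the same initial-condition constraint for $\beta\ge\tfrac12$. Because $L_e>0$ was arbitrary and the $\ddot\theta_i$ regulator structure is untouched by the substitution, agents need not know the true half-size, establishing the corollary.
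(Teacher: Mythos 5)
Your proof is correct and is essentially the paper's own argument run in the opposite direction: the paper picks a rescaled Cucker--Smale target (numerator multiplied by $L_e/L$, $\sigma$ chosen so that $\sigma^{-2}=L_e^2/L^2$) and shows the induced feedback laws contain only $L_e$, whereas you start from the $L_e$-feedback and factor the realized closed-loop dynamics back into standard C--S form with $H'=H(L_e/L)^{1-2\beta}$ and $\sigma'=L/L_e$. Both versions rest on the same two key facts---linearity of the reconstructed relative velocity and of $L_e^2\cot^2\alpha_{ij}=(L_e/L)^2r_{ij}^2$ in the assumed half-size, and invariance of Thm.~\ref{l:csThm} under positive rescaling of $H$ and $\sigma$ with $\beta$ unchanged---so the two proofs coincide in substance.
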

\begin{proof}
The C-S dynamics in Eq.~\eqref{e:Continuous_CS_Dynamics} can be written in the form of
\begin{equation}\label{e:proof1}
\dot{\vec{v}}_i=\sum_{j=1}^N \frac{H\sigma^{-2\beta}(\vec{v_j}-\vec{v_i})}{\left[1+\sigma^{-2\beta}r_{ij} ^2\right]^{\beta}},
\end{equation}
We can multiply the numerator in Eq.~\eqref{e:proof1} by a positive number $L/L_e$ without affecting the convergence of the C-S dynamics. Given that convergence holds for any $\sigma>0$, then we can also choose $\sigma$ such that $\sigma^{-2\beta}=\frac{L_e^2}{L^2}$. Then Eq.~\eqref{e:proof1} becomes
\begin{equation}\label{e:proof2}
\dot{\vec{v}}_i=\sum_{j=1}^N \frac{\frac{L_e}{L} H(\vec{v_j}-\vec{v_i})}{\left[1+\frac{L_e^2}{L^2}r_{ij} ^2\right]^{\beta}},
\end{equation}
Choosing Eq.~\eqref{e:proof2} to be the desired value of the velocity derivative in Eq.~\eqref{e:dot_Vstar_1} as
\[\dot{\vec{v}}_i^*=\sum_{j=1}^N \frac{\frac{L_e}{L} H(\vec{v_j}-\vec{v_i})}{\left[1+\frac{L_e^2}{L^2}r_{ij} ^2\right]^{\beta}},\]
and following the same procedure, we reach
\begin{equation} \hspace{-.3em}
\dot{v}_i\!=\!HL_e\sum_{j=1}^N \frac{\mp\dot{\alpha}_{ij}(1\!+\!\cot^2{\alpha_{ij})}\cos\gamma_{ij}-\left(\dot{Q}_{ij}\!+\!\dot{\theta_i}\right)\cot\alpha_{ij}\sin\gamma_{ij}}{\left(1+L_e^2\cot^2\alpha_{ij}\right)^{\beta}}\label{e:vdotLe}
\end{equation}
and 
\begin{equation}\begin{split}
\ddot{\theta}_i=-k\dot{\theta}_i+\hspace{.4\textwidth}\\
\frac{kHL_e}{v_i}\sum_{j=1}^N\frac{\mp\dot{\alpha}_{ij}(1\!+\!\cot^2{\alpha_{ij})}\sin\gamma_{ij}+\left(\dot{Q}_{ij}\!+\!\dot{\theta}_i\right)\cot\alpha_{ij}\cos\gamma_{ij}}{\left(1+L_e^2\cot^2\alpha_{ij}\right)^{\beta}}.\end{split}\label{e:thetaDotLe}\end{equation}\end{proof}

Thus, the size knowledge question does not impose a reasonable challenge to convergence under YFM feedback.

\section{Numerical simulation} 
Simulations were conducted in MATLAB implementing Eqns. ~\eqref{e:Agent_Dynamics},~\eqref{e:V_dot}-\eqref{e:qdot}, and incorporating noisy measurements and perception limits, and $\gamma_{ij}$ computed by $ \gamma_{ij}=\arctan{\frac{y_j-y_i}{x_j-x_i}}-\theta_i$. Additive measurement noise was included on $\dot{Q}$ and $\alpha$ as
\begin{align*}\dot{Q}_n&=\dot{Q}+\sigma_q\mathtt{randn}\\
\alpha_n&=\alpha+\sigma_a\mathtt{randn},\end{align*}
where $\sigma_q$ and $\sigma_a$ denote the standard deviation of noise on $\alpha$ and $\sigma_q$, respectively, and $\mathtt{randn}$ denotes a zero mean unit intensity normal distribution. Simulations used Euler integration  and a time step of 0.01 seconds, and began from the initial positions chosen randomly as shown in Fig.~\ref{f:2dTrajNoisy} (circles). Simulations involving noise incorporate a minimum visibility limit $\munderbar{\alpha}$, which assumes the sensing and feedback path cannot resolve agents smaller than this angular size. Except where specified otherwise, simulations used the parameters in Table \ref{t:simulationParameters}.
\begin{table}[h] \centering
\caption{Simulation parameters: constants $\beta,H$, feedback gain $k$, agent size $L$, and $\munderbar{\alpha}$ visibility limit. }
\begin{tabular}{ccccc} \toprule
\multicolumn{3}{c}{Feedback} & \multicolumn{2}{c}{Simulation}\\
\cmidrule(lr){1-3} \cmidrule(lr){4-5}
$\beta$ & $H$& $k$ & $L$ (m) & $\munderbar{\alpha}$ (rad) \\
\cmidrule{1-5}
\centering 0.4 &  1 &    20 &   1 & 0.005 \\ \bottomrule
\end{tabular}\label{t:simulationParameters}\end{table}
For simulations exhibiting decaying oscillation, an equivalent natural frequency and damping ratio was computed from an agent time history by applying the logarithmic decrement method to successive peaks \citep{palm2010systemDynamics}.

\subsection{Visually-guided YFM and idealized C-S comparison}
In this simulation, $\beta=0.4$ is chosen and five agents with different initial speeds and orientations are used. Figure \ref{f:velocities} and \ref{f:orientations} show that both speeds and orientations converge to a common value. The visually-driven convergence is slower than observed in the perfect information Cucker-Smale case, and also includes oscillations not seen in the C-S case. Although the convergence to flocking definition is provided, the wider convergence envelope and oscillation structure gives rise to a more chaotic looking trajectory structure.

\begin{figure}[htbp]\centering
\begin{subfigure}{0.24\textwidth}
    \includegraphics[width=\textwidth]{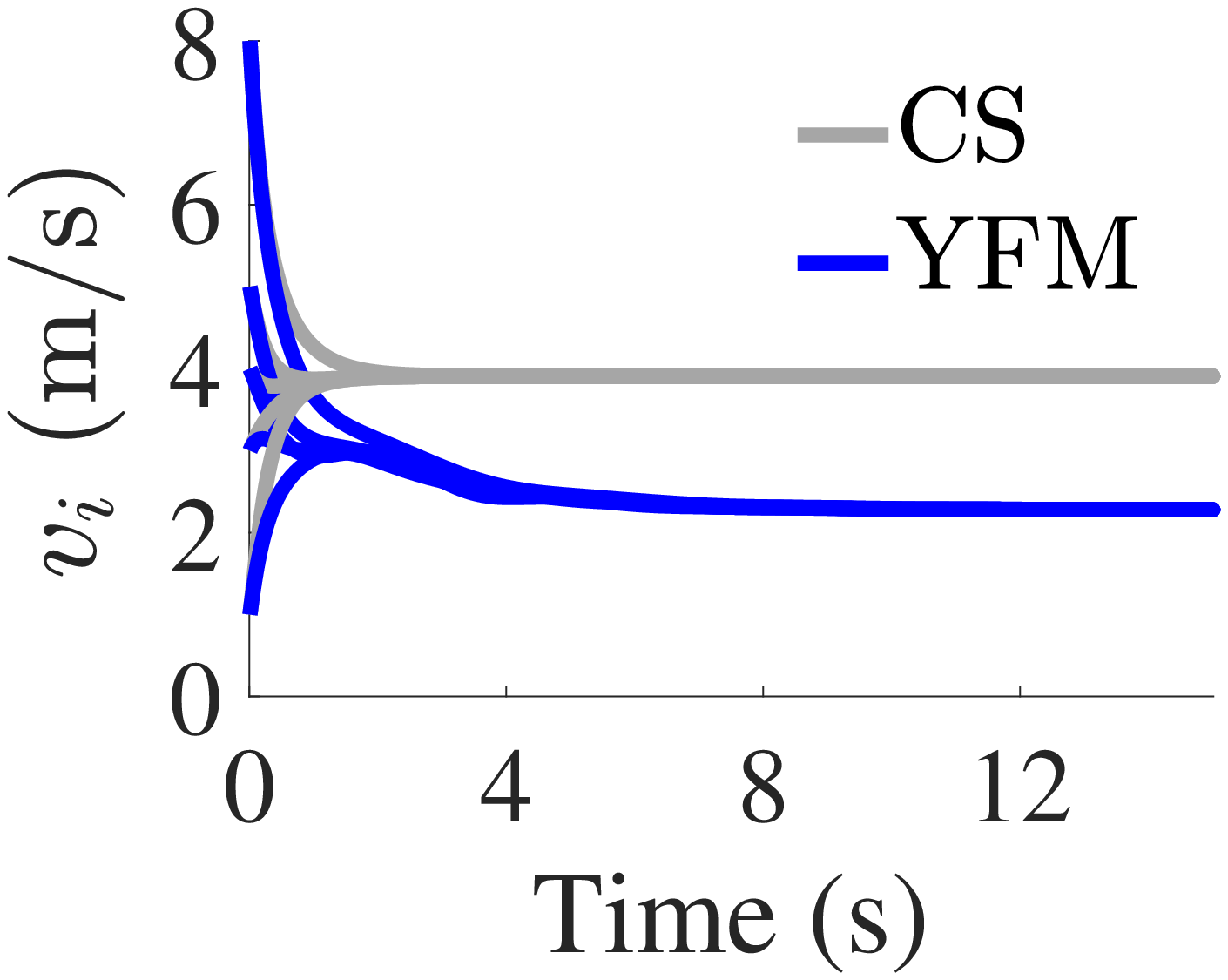}
    \caption{Agent speed convergence}
    \label{f:velocities}
\end{subfigure}
\begin{subfigure}{0.24\textwidth}
\includegraphics[width=\textwidth]{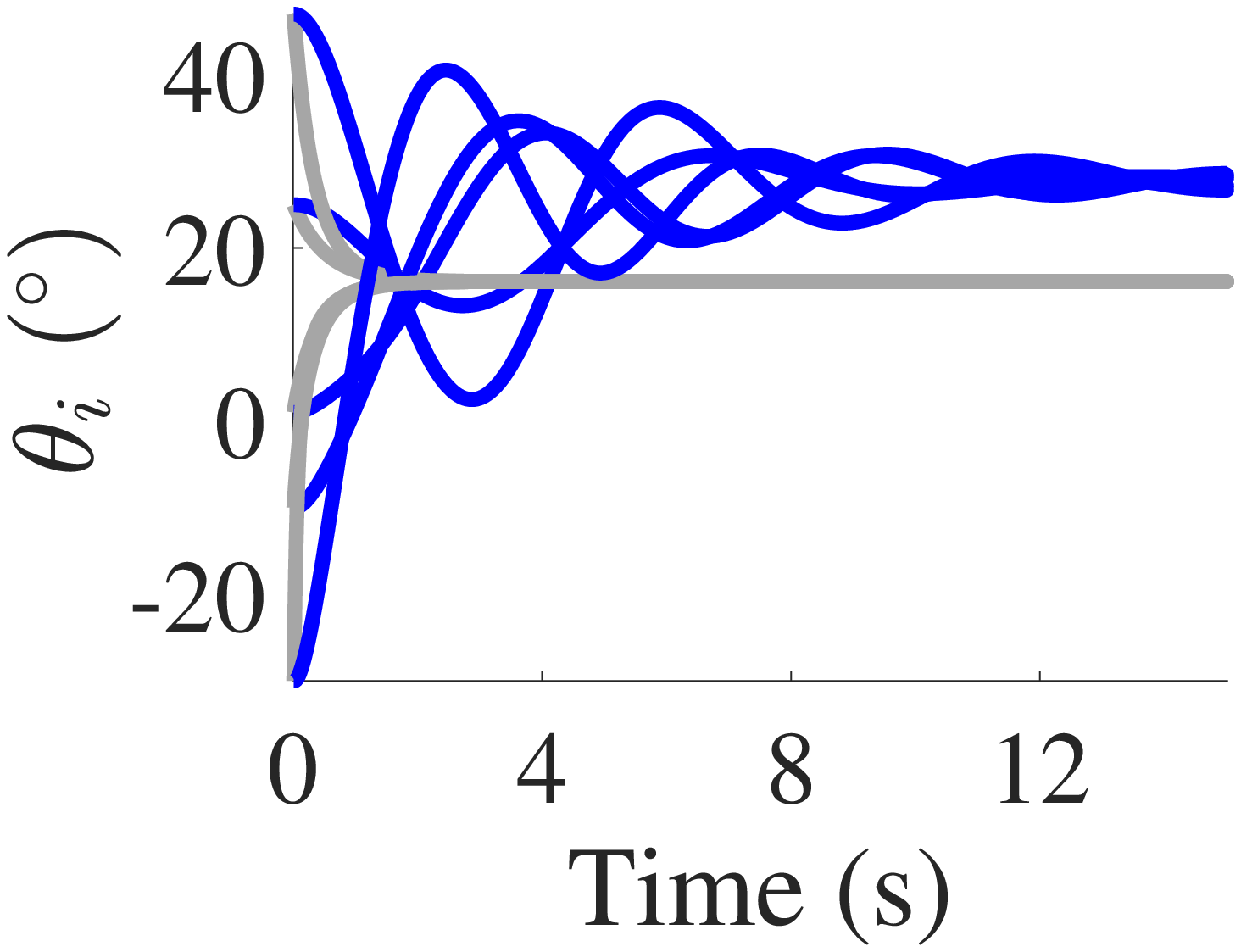}     
\caption{Agent orientation convergence.}                          \label{f:orientations}
\end{subfigure}
\caption{YFM agent's velocities (speed and orientation), showing convergence under low gain ($k=0.2$) and comparison to idealized Cucker-Smale flocking.}\end{figure}

\subsection{Parameter sensitivity}
Parameter sensitivity was assessed in simulation for $H,k$, and $L$, as seen in Fig.~\ref{f:simVary}. The number of oscillations increases with increasing $H$ (Fig.~\ref{f:simVaryH}) and with decreasing $k$ (Fig.~\ref{f:simVaryk}). Functionally, the $H$ parameter appears in both velocity and orientation update equations, while the orientation gain $k$ enters only in the orientation update equation. As predicted by Section \ref{ss:agentSize}, Fig.~\ref{f:simVaryL} shows that $L$ has no effect on convergence rate or behavior.

\begin{figure}[htbp]\centering
\begin{subfigure}{0.2521\textwidth}
    \includegraphics[trim=6 0 20 0, clip,width=\textwidth]{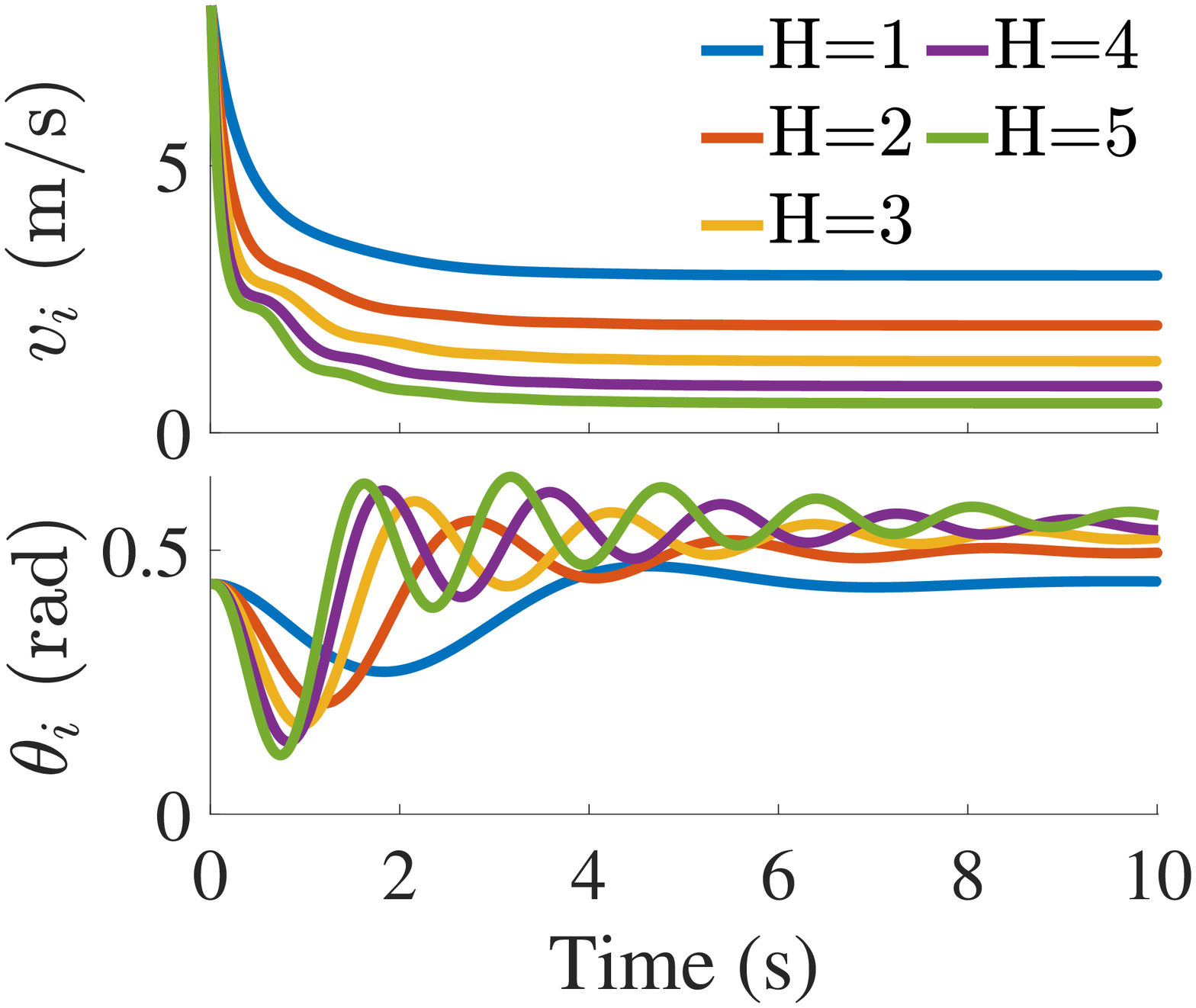}\caption{Varying $H$}\label{f:simVaryH}\end{subfigure}
\begin{subfigure}{.2521\textwidth}
	\includegraphics[trim=6 0 20 0, clip, width=\textwidth]{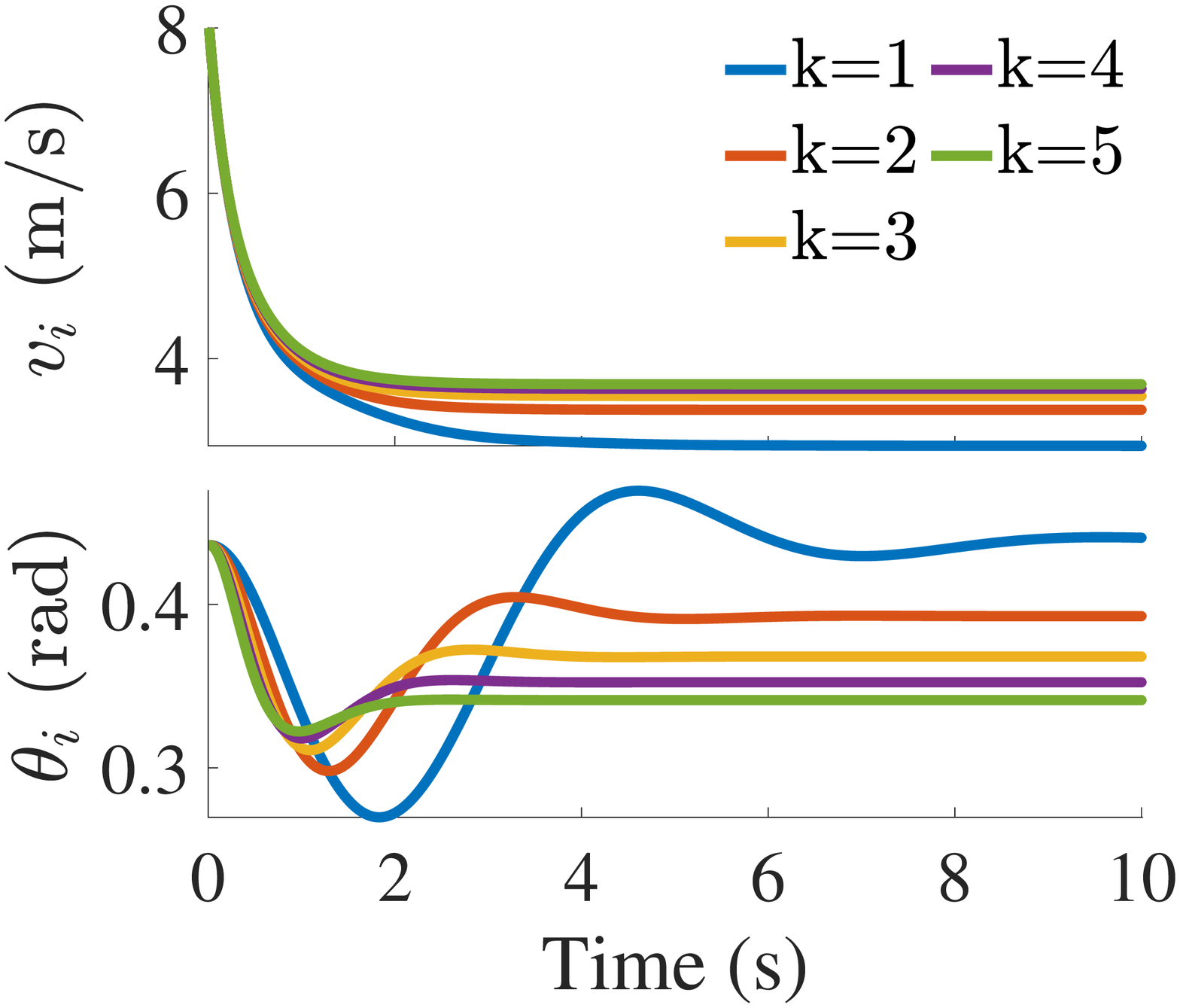} \caption{Varying $k$}\label{f:simVaryk}\end{subfigure}
\begin{subfigure}{0.25\textwidth}
\includegraphics[width=\textwidth]{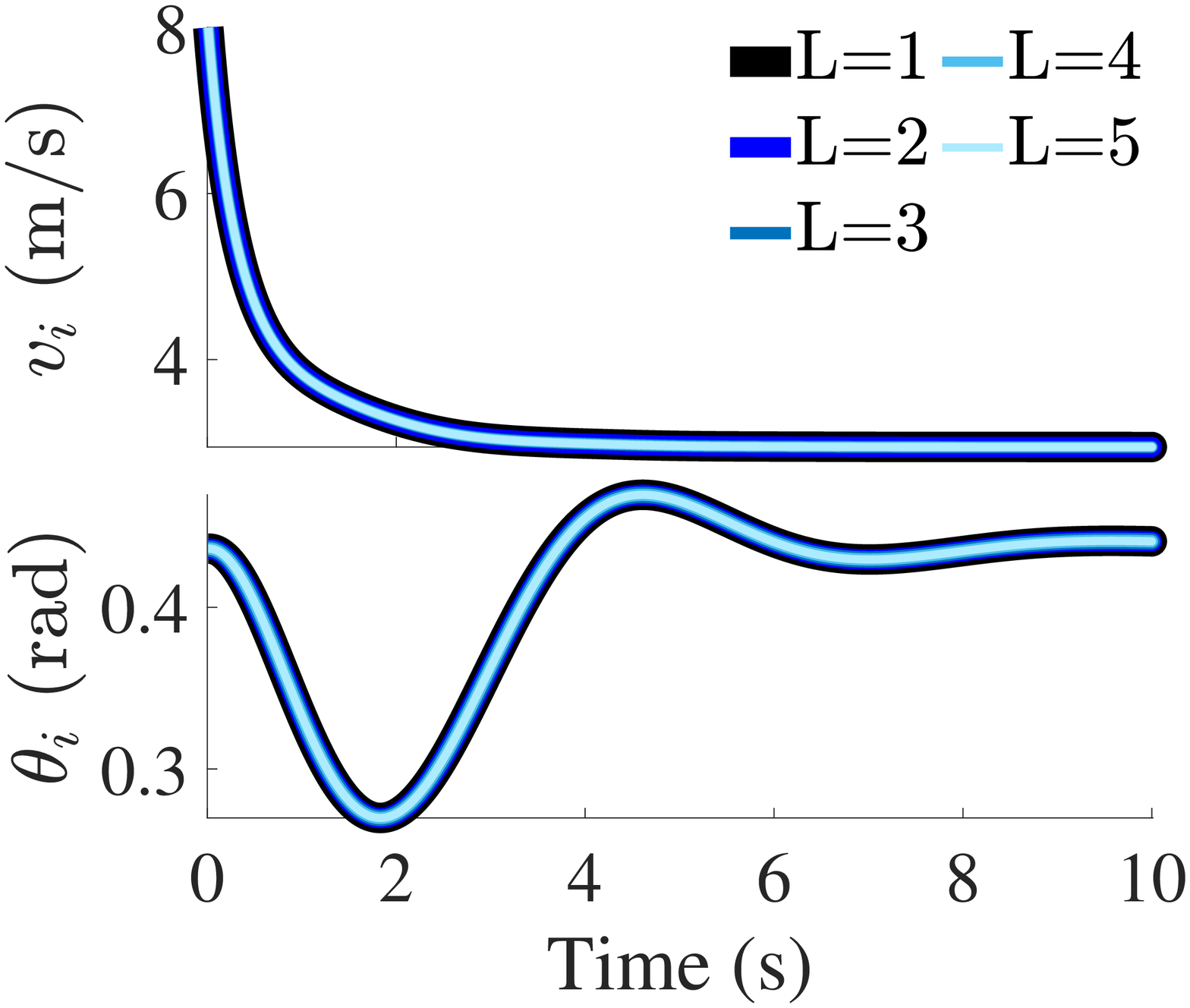}\caption{Varying $L$ (meters).}\label{f:simVaryL}\end{subfigure}
\caption{A single agent in the swarm is illustrated during parameter variation, showing that $k$ and $H$ variation affects convergence rate, oscillation period, and converged asymptote, while results are unchanged under $L$ variation. (During parameter variation, the remaining two parameters were set to 1.)}\label{f:simVary}\end{figure}

\subsection{Linearity assessment}

Although the simulations show decaying oscillations for low values of gain $k$ or $H$, one must not assume such behavior can be described by a linear system solution having a sinusoid of constant frequency bounded by a decaying exponential envelope. In particular, Fig.~\ref{f:equivDampingNatlFreq} shows that the oscillations of a single agent have a period growing with increasing time (e.g., the natural frequency decreases), and the equivalent damping ratio also decreases with increasing time. 
\begin{figure}\centering
    \begin{subfigure}{0.23\textwidth}\includegraphics[width=\textwidth,trim=0 50 0 0, clip]{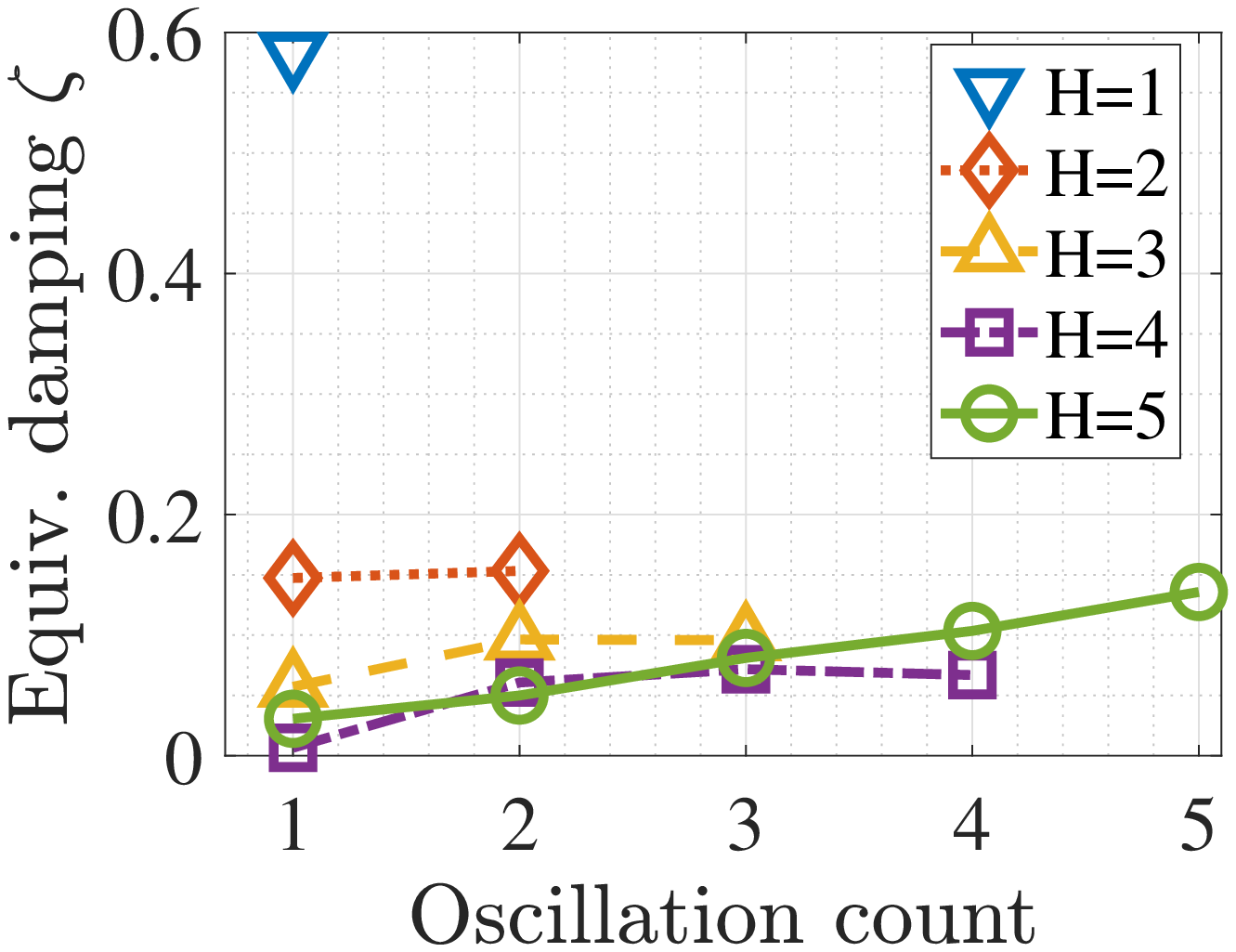}
    \end{subfigure}
    \hspace{4pt}
    \begin{subfigure}{0.23\textwidth}\includegraphics[width=\textwidth,,trim=0 50 0 0, clip]{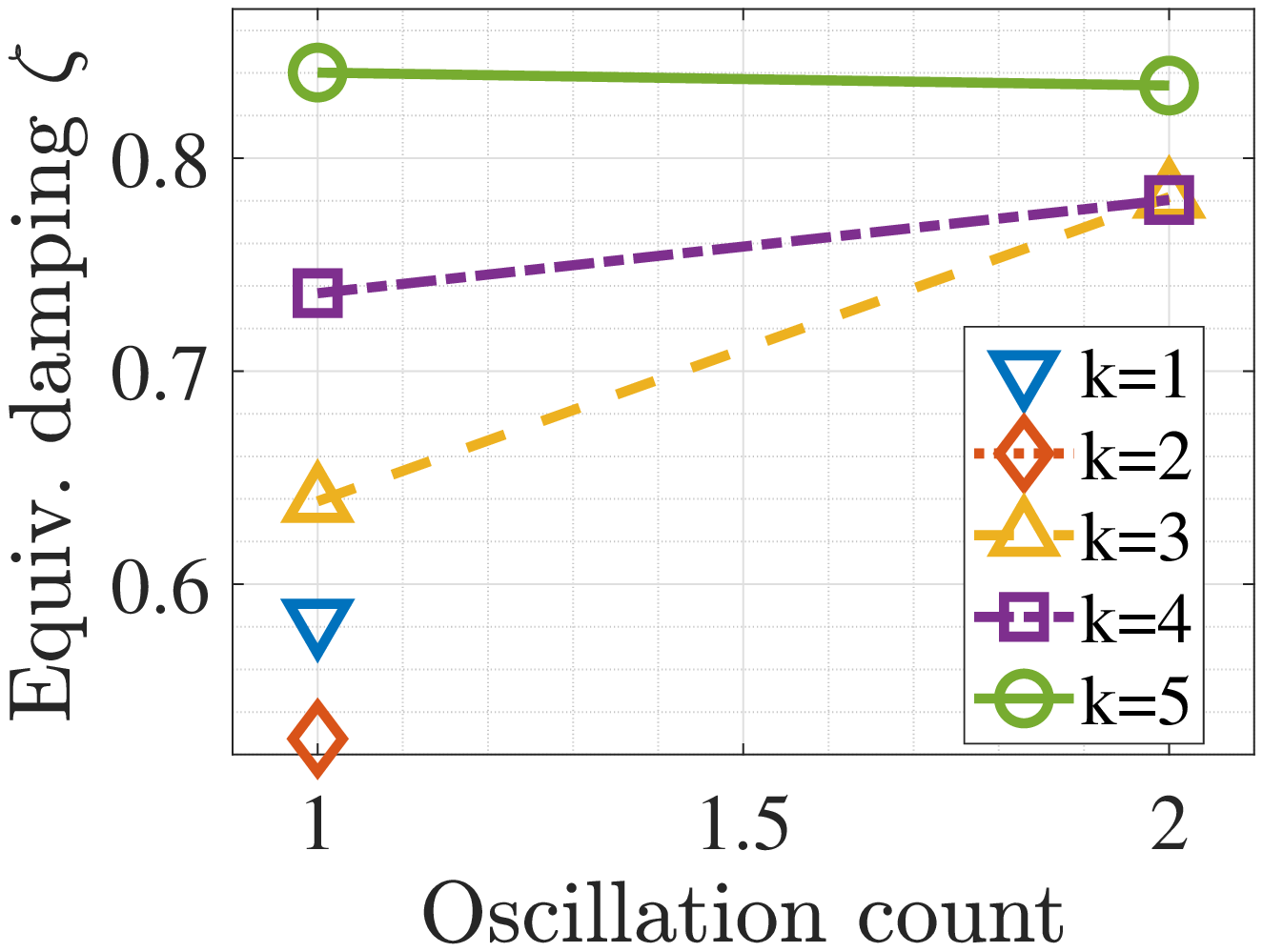}
    \end{subfigure}
\\ \vspace{5pt}
    \begin{subfigure}{0.23\textwidth}\includegraphics[width=\textwidth]{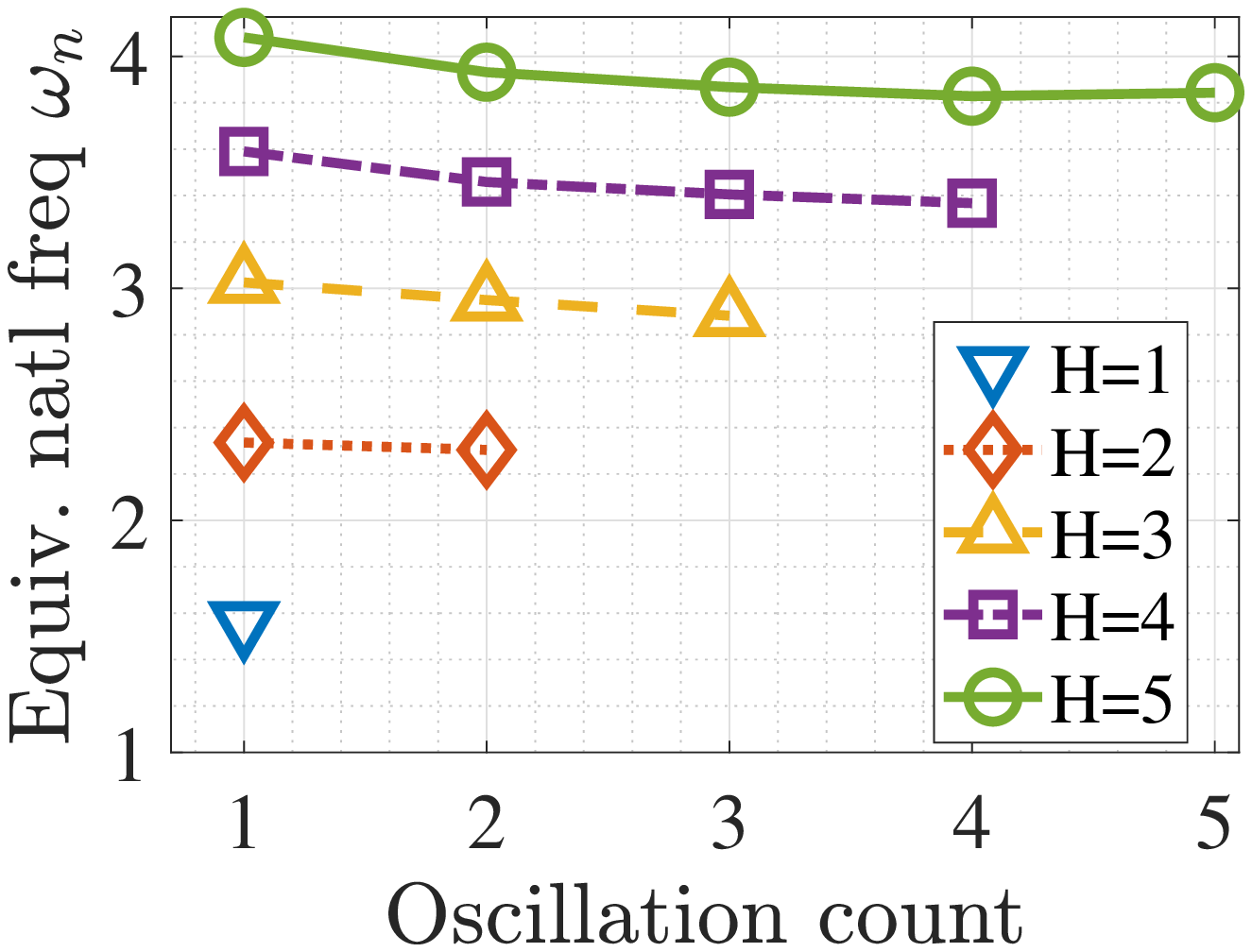}
    \caption{Coupling constant $H$} \label{f:oscillationH}
    \end{subfigure}\hspace{4pt}
    \begin{subfigure}{0.23\textwidth}\includegraphics[width=\textwidth]{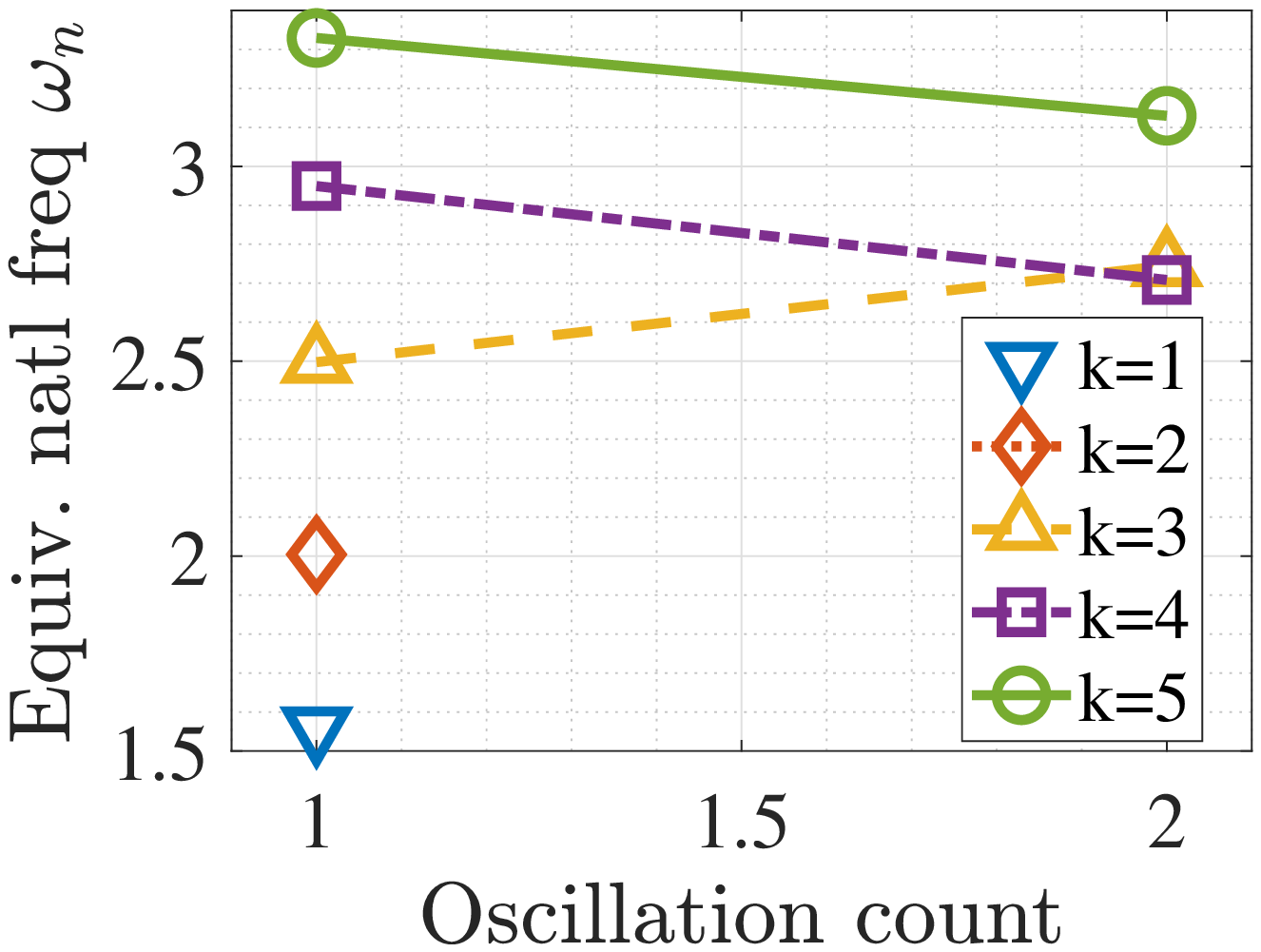}
    \caption{Gain $k$} \label{fig:oscillationk}\end{subfigure}
    \caption{The equivalent damping ratios and natural frequencies, as quantified by successive peaks in the first agent's heading angle history, show time varying frequencies and damping and suggest the behavior is inherently nonlinear. }\label{f:equivDampingNatlFreq}
\end{figure}

\subsection{Effects of measurement noise}
Trajectory plots, as seen in Fig.~\ref{f:2dTrajNoisy}, illustrate that the asymptote may be affected by the inclusion of measurement noise.
\begin{figure}\centering
    \includegraphics[width=0.33\textwidth]{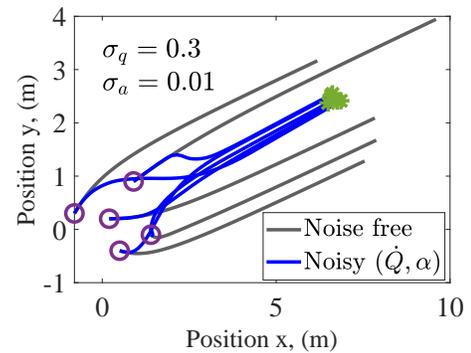}
    \caption{{Agent position trajectories, showing performance with and without additive measurement noise. Circles/stars indicate initial/terminal agent positions. ($L$=.08m, $T_\mathrm{max}$=2sec).}}\label{f:2dTrajNoisy}\end{figure}
Additional simulations incorporating noise, as illustrated in Fig.~\ref{f:varyingMeasurementNoise}, illustrate that the system behavior does show degradation in performance with increasing noise level. Convergence, as measured by monotonic convergence to an asymptote, is more sensitive to noise on $\dot{\alpha}$ than on $\dot{Q}$.  

\begin{figure}[htbp]\begin{center}
\includegraphics[width=0.24\textwidth]{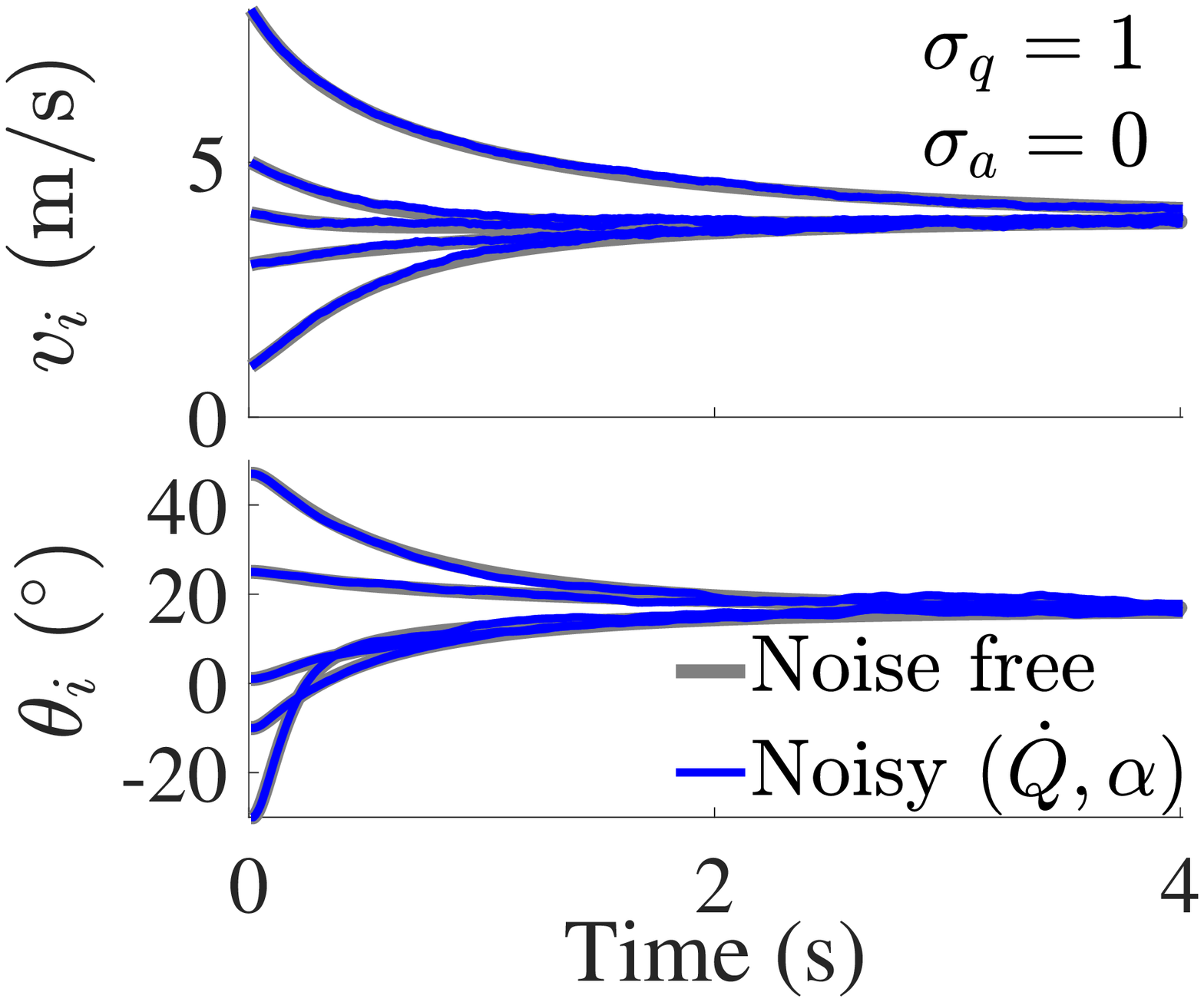}
\includegraphics[width=0.24\textwidth]{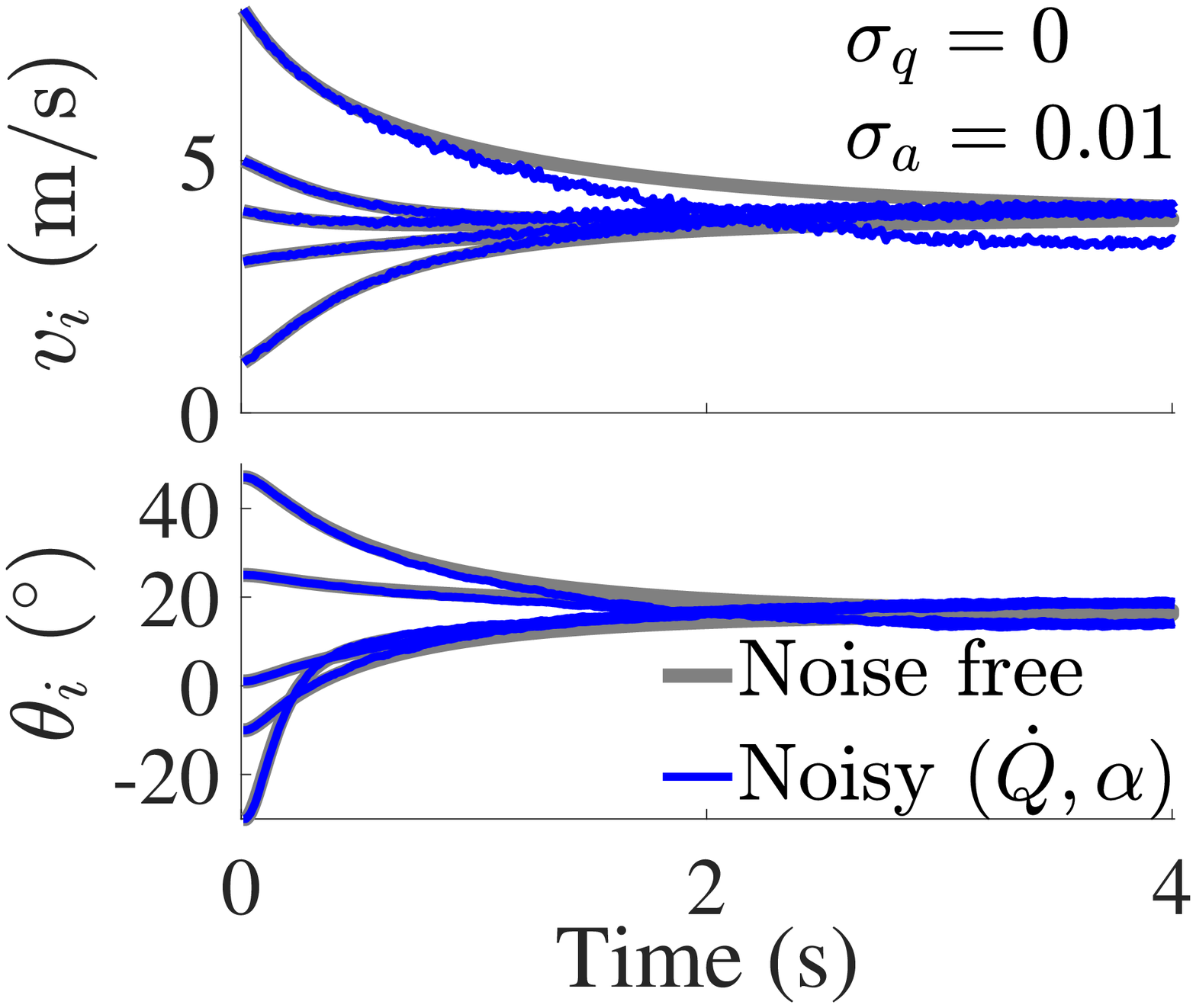}
\includegraphics[width=0.24\textwidth]{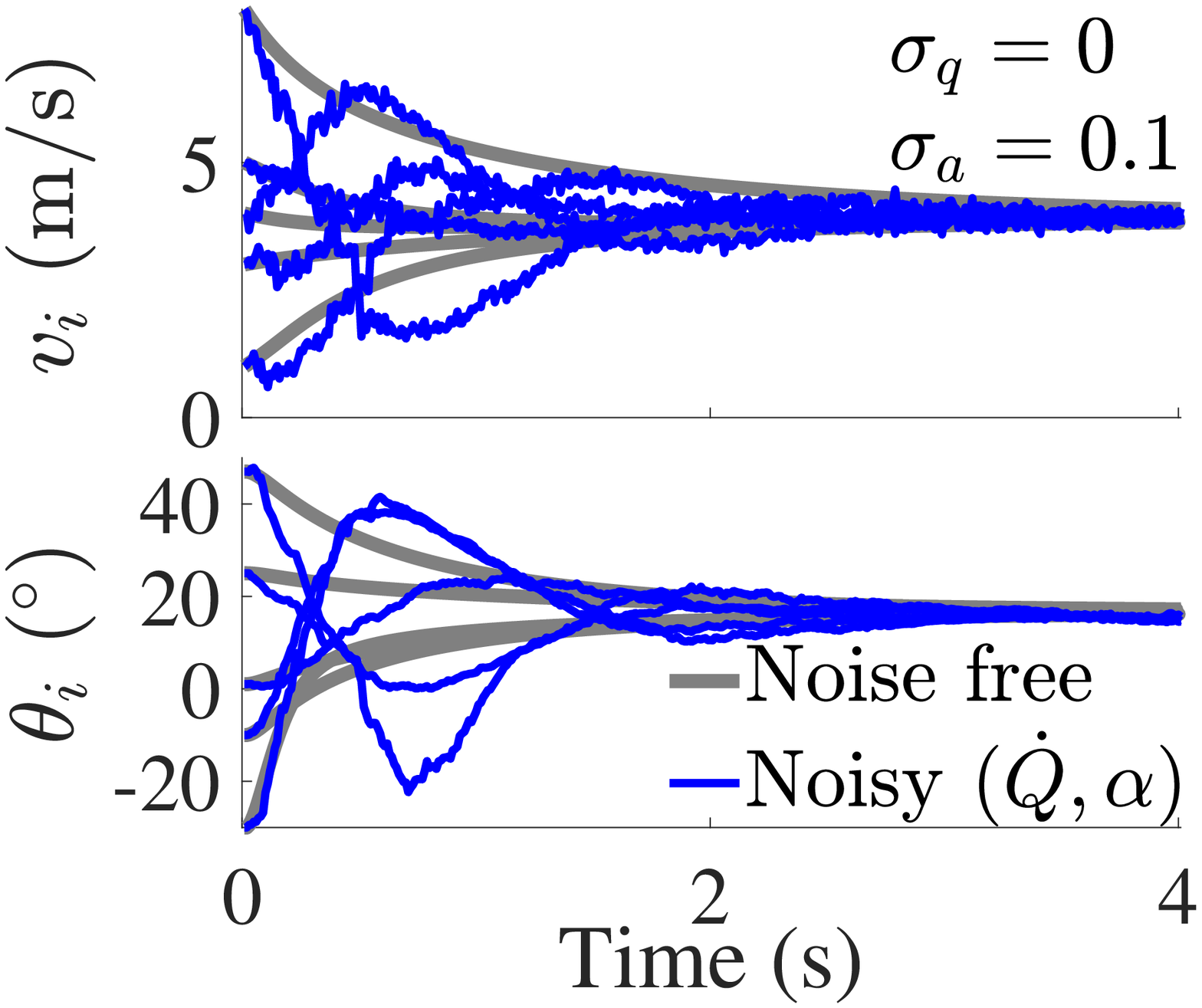}
\includegraphics[width=0.24\textwidth]{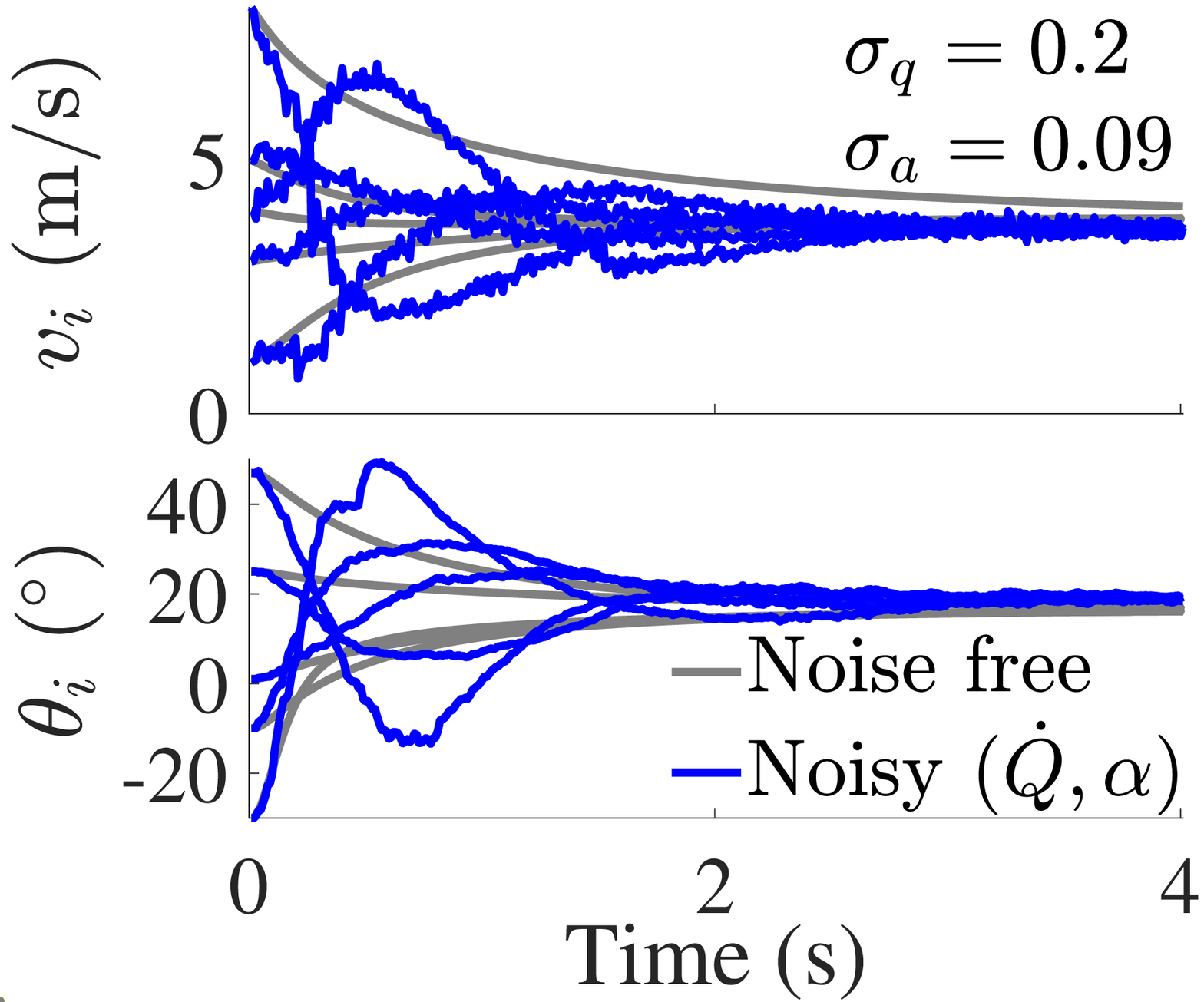}
\caption{YFM convergence under varying levels of measurement noise on $\dot{Q}$ and $\alpha$.}\label{f:varyingMeasurementNoise}\end{center}\end{figure}

\section{Discussion}\label{Discussion}

\paragraph{Measurement noise}
Theorem \ref{t:OfFlockingThm} describes convergence in the idealized sensor (noise-free) case. Realistic implementations involve noise, which has been previously shown to be capable of both supporting and discouraging C-S flocking \citep{Sun2015csNoiseMultPositive}. 
We can derive at least one criteria for noise-induced flocking if following assumptions are true:
\begin{description}
\item[A1] Agent $i$ ignores visual signals in the regions  $\gamma \in (-\Gamma,+\Gamma)$ and $\gamma \in (\pi-\Gamma,\pi+\Gamma)$ as seen in Fig.~\ref{f:blindSpots}, (e.g., the field of view of the agents are limited such that there are blind spot located directly in front of and behind the agents~\citep{soria2019LatVisionFlocking}),

and
\item[A2]  There is an upper bound $\rho$ for the size of the group, in other words $t\geq 0$, $\forall i,j=1,...,N:$ $r_{ij}\leq \rho$. Such a restriction may have an origin in sensing limitations.\end{description}

\begin{figure}[htbp]\centering
    \includegraphics[width=0.33\textwidth]{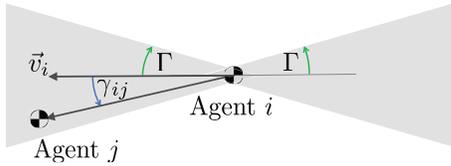}
    \caption{The occlusal angle $\Gamma$ used in assumption A1 specifies the region (gray) where agent $i$ ignores optic flow signals from neighboring agents ($j$)    }\label{f:blindSpots}\end{figure}

Under these assumptions, a bound for the noise on the optic flow signal may be found.

\begin{prop}
Suppose the C-S dynamics in a discretized form of Eqn. \eqref{e:Continuous_CS_Dynamics} are convergent for bounded velocity noise
\[v_{j,n} = v_j+n, \mathrm{~s.t.~} |n|<\bar{n},\]
$\alpha$ is measured with no noise, and that optic flow measurements contain additive noise $q_{ij}$ as \[\dot{Q}_{ij,n} = \dot{Q}_ij + q_{ij},\] and sensory assumptions A1 and A2 apply. Then, the YFM dynamics in Eqns.~\eqref{e:vdot}-\eqref{e:thetaDot} are convergent under bounded optic flow noise $|q_{ij}|<\bar{q},$ where
\begin{equation}\bar{q}=\frac{\bar{n}\sin{\Gamma}}{\rho}. \label{e:qnoiseBoundGamma}\end{equation}\end{prop}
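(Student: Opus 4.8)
The plan is to reduce the statement to the \emph{assumed} Cucker--Smale convergence-under-velocity-noise result by showing that additive optic-flow noise $q_{ij}$ is indistinguishable, from agent $i$'s standpoint, from a bounded velocity noise on the neighbor $j$. The entry point is the reconstruction identity \eqref{e:VjSin-ViSin}, $v_j\sin\gamma_{ji}-v_i\sin\gamma_{ij}=-r_{ij}(\dot{Q}_{ij}+\dot\theta_i)$, which is the only place the optic flow enters the normal component of the reconstructed relative velocity; the tangential component is built from $\dot\alpha_{ij}$, which is noise-free by hypothesis, and $\dot\theta_i$ is the agent's own (known) heading rate. First I would substitute $\dot{Q}_{ij,n}=\dot{Q}_{ij}+q_{ij}$ into this identity, so that the reconstructed normal component acquires exactly the additive error $-r_{ij}\,q_{ij}$ and nothing else is perturbed.

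Next I would convert this normal-component error into the equivalent velocity noise required by the hypothesis. Isolating the neighbor's contribution $v_j\sin\gamma_{ji}$ and solving \eqref{e:VjSin-ViSin} for $v_j$, the noisy optic flow produces a perceived speed $v_{j,n}=v_j-r_{ij}q_{ij}/\sin\gamma_{ji}$; that is, the induced velocity noise is $n=-r_{ij}q_{ij}/\sin\gamma_{ji}$, exactly of the form $v_{j,n}=v_j+n$ assumed to be tolerated by the discretized C--S dynamics.

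The final step is to bound $|n|$ using the two sensory assumptions and match it against the tolerance $\bar n$. Assumption A2 supplies $r_{ij}\le\rho$, while assumption A1 is precisely what keeps the projection factor away from zero: discarding the fore/aft cones $\gamma\in(-\Gamma,\Gamma)\cup(\pi-\Gamma,\pi+\Gamma)$ leaves only retained angles with $|\sin\gamma_{ji}|\ge\sin\Gamma$. Hence $|n|=r_{ij}|q_{ij}|/|\sin\gamma_{ji}|\le\rho|q_{ij}|/\sin\Gamma$, and requiring $|n|<\bar n$ gives $|q_{ij}|<\bar n\sin\Gamma/\rho=\bar q$, at which point the assumed C--S convergence-under-noise result yields YFM convergence.

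I expect the principal obstacle to be justifying, in the second step, that the optic-flow error really matches the velocity-noise model of the C--S hypothesis rather than some larger or differently structured perturbation. The cleanest way through is to observe that the reconstruction is affine in $\dot{Q}_{ij}$, so $q_{ij}$ propagates additively into a single perceived-velocity component, and to emphasize that A1 is the structural ingredient that makes the argument work: without excluding the blind directions, $\sin\gamma_{ji}\to0$ would let an arbitrarily small $q_{ij}$ inflate into an unbounded equivalent speed noise. The only elementary estimate to check is that $|\sin\gamma|$ attains its minimum $\sin\Gamma$ on the retained region at the cone boundaries.
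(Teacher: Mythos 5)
Your proposal is correct and is essentially the paper's own argument: the paper derives the optic-flow perturbation induced by a speed noise, $q_{ij}=-\mu_{ij}n_{ij}\sin\gamma$ with $\mu_{ij}=1/r_{ij}$, and then bounds it using A1 ($|\sin\gamma|\ge\sin\Gamma$) and A2 ($\mu_{ij}\ge 1/\rho$), while you read the same identity in the inverse direction, converting a given $q_{ij}$ into an equivalent speed noise $n=-r_{ij}q_{ij}/\sin\gamma_{ji}$ and requiring $|n|<\bar n$ — algebraically identical reductions to the assumed C--S noise-tolerance result, yielding the same $\bar q=\bar n\sin\Gamma/\rho$. Your version is, if anything, the cleaner reading (and more consistent in its indices: the paper's noise equation writes $\sin\gamma_{ij}$ where its own optic-flow formula produces $\sin\gamma_{ji}$), though both write-ups share the resulting minor subtlety that A1 as stated constrains the observer's angle $\gamma_{ij}$ rather than the neighbor's angle $\gamma_{ji}$ appearing in the $v_j$ term.
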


\begin{proof}Let $n$ be a random variable added to measurements of agent $j$'s speed, such that optic flow is now 

\begin{equation}\dot{Q}_{ij}(v_j+n_{ij})=-\dot{\theta}_i+\frac{1}{r_{ij}}(v_i \sin\gamma_{ij}-(v_j+n_{ij}) \sin \gamma_{ji}).\end{equation}

Denote $q$ as the resulting change in optic flow from the noise-free case. Defining a nearness function $\mu_{ij}=1/r_{ij}$ and applying Eqn.~\eqref{e:qdot},
\begin{equation}
q_{ij}=\dot{Q}(v_j+n)-\dot{Q}(v_j)=-\mu_{ij} n_{ij} \sin {\gamma}_{ij}.\end{equation}
When a convergence guarantee for additive velocity noise with magnitude bounded by $\bar{n}$ is available, then YFM is convergent for a noise of $q_{ij}\leq |-\mu_{ij} \bar{n} \sin {\gamma}_{ij}|$ on optic flow, which may be stated as the condition
\begin{equation}|q_{ij}|\leq \min\{|-\mu_{ij} n_{ij} \sin {\gamma}_{ij}|\}.
\label{e:qnoiseboundgenl}
\end{equation}

Assumption A1 provides $\min|\gamma_{ij}|\geq\Gamma,$ and A2 provides $\min|\mu_{ij}|\geq1/\rho,$ thus under A1 and A2, Eqn.~\eqref{e:qnoiseboundgenl} is bounded. Choose $n_{ij}=\bar{n}$, where $\bar{n}$ is the maximum noise magnitude on the speed measurement for which a CS dynamics convergence guarantee is available, then we can obtain $\bar{q}$, an upper bound of noise on the optic flow signal for which the YFM is convergent as  $\bar{q}=\frac{\bar{n}\sin{\Gamma}}{\rho}.$
\end{proof}

\paragraph{Dimensionality} The C-S model was originally developed describing 3-dimensional Euclidean space \citep{cucker2007TacEmergent}. This work applies the model to planar rigid body motion also having 3 configuration variables, and the no sideslip condition results in $\theta$ being fully specified by the velocity.

\paragraph{Required measurements} The primary sensory inputs in this method are optic flow $\dot{Q}$ and rotation rate $\dot{\theta}$. Both insects and robotic implementations have relatively mature pathways to recover these signals (compound eyes/cameras and gyros/halteres). While compass heading is used, it is applied in a quadrant level filter, and thus a relatively imprecise compass heading is sufficient, which is easily recoverable from even a rudimentary solar compass. When more precise measurements of $\theta_i$ are available, an orientation stiffness may be integrated into Eqn.~\eqref{e:thetaDot}, suggesting one reason insects in outdoor conditions may outperform swarm attempts in indoor conditions.

This modest sensor requirement is a significant advance from historic frameworks like C-S requiring the relative position and velocity of all other agents, which would require either an explicit communication network with sufficient bandwidth for the number of agents (a poorly scaled problem) or onboard sensing of those quantities, which is both prohibitive for small unmanned aerial systems and has no clear sensory path in insect biology.

\paragraph{Use of cotangent} $\cot\alpha$ is poorly behaved at zero and $\pi$. $\alpha=0$ does not occur in this problem by construction.  Its presence in the denominator of inputs $u_i^v$ and $u_i^\omega$ has the attractive feature that small values of $\alpha$ result in a vanishing contribution to the control, enforcing in a decaying response with distance. Along with the minimum resolvable angle $\munderbar{\alpha}$ and occlusal limit $\Gamma$, the $\cot^2\alpha$ scaling serves to mitigate the field of view, occlusion, and distance challenges of visual sensing described in previous work \citep{asadi2016limitedFovSwarm,soria2019LatVisionFlocking}.

\paragraph{$\alpha$ and noise}
In a biological example, $\alpha$ may be identified from $\dot{Q}$, e.g., it is sufficient for an agent to count the number of elementary motion detectors outputting a neighbor-induced optic flow \citep{haag2004reichardtDetectors}. The simulations directly applying noise on $\alpha$ do demonstrate sensitivity. However, noise on the primary measurement $\dot{Q}$ does not immediately imply noise on $\alpha$ as regions of large and small signal remain uniquely defined for $n_q/\dot{Q}$ noise-to-signal ratios below 1. 

\paragraph{Applicability} The applicability to $(v,r)$ flocking structures in previous literature provides a broad applicability to numerous models. For example, the use of YFM on the C-S dynamics can be seen as a systematic mapping updating both speed and heading. When the framework is applied to more restricted examples like Vicsek models incorporating heading only changes, it is comparable to feedback laws providing visually guided headings updates like \citet{moshtagh2007flockingTac}, whose constant (unit) speed and evolvable heading construction could be thought of as a visually-guided analog to a Vicsek model \citep{vicsek1995swarm}. The framework in this paper shows that if an idealized swarming (perfect information) feedback rule may be written in $(v,r)$ form, a visual navigation adaptation of those models exists, creating a foundation to systematically compare existing and future $(v,r)$ models \citep{reynolds1987flocksBoids,vicsek1995swarm,cucker2007TacEmergent}.

\section{Conclusion}\label{Conclusion}
This study developed a framework for mapping distributed feedback flocking update rules into visually guided swarming law that do not require a position reference or explicit communication network, and is suitable for distributed feedback on aerial systems. The approach relies on optic flow to ensure smooth flocking.  The primary proof relies on creating a correspondence of visual navigation to the accepted Cucker-Smale proof, in which instead of velocities and position vectors, we have used optic flow fields. The results are then further generalized to show that agent size need not be known, the algorithm provides some robustness to measurement noise, and has some differences in behavior to traditional Cucker-Smale. The impact of this framework is to provide concise, visually-guided analogues to a class of models able to be expressed in velocity and radius form, such as the well-known \citet{vicsek1995swarm} and \citet{reynolds1987flocksBoids}, improving the relevance of these models to robotic systems and to understanding the behavior of insect paths.

\nolinenumbers 
\bibliographystyle{IEEEtranN} 
\bibliography{00YMFshortPreprint.bbl}

\begin{thebibliography}{37}
\providecommand{\natexlab}[1]{#1}
\providecommand{\url}[1]{#1}
\csname url@samestyle\endcsname
\providecommand{\newblock}{\relax}
\providecommand{\bibinfo}[2]{#2}
\providecommand{\BIBentrySTDinterwordspacing}{\spaceskip=0pt\relax}
\providecommand{\BIBentryALTinterwordstretchfactor}{4}
\providecommand{\BIBentryALTinterwordspacing}{\spaceskip=\fontdimen2\font plus
\BIBentryALTinterwordstretchfactor\fontdimen3\font minus
  \fontdimen4\font\relax}
\providecommand{\BIBforeignlanguage}[2]{{%
\expandafter\ifx\csname l@#1\endcsname\relax
\typeout{** WARNING: IEEEtranN.bst: No hyphenation pattern has been}%
\typeout{** loaded for the language `#1'. Using the pattern for}%
\typeout{** the default language instead.}%
\else
\language=\csname l@#1\endcsname
\fi
#2}}
\providecommand{\BIBdecl}{\relax}
\BIBdecl

\bibitem[Parsons et~al.(2010)Parsons, Krapp, and
  Laughlin]{parsons2010sensorFusion}
M.~M. Parsons, H.~G. Krapp, and S.~B. Laughlin, ``Sensor fusion in identified
  visual interneurons,'' \emph{Current Biology}, vol.~20, no.~7, pp. 624--628,
  2010.

\bibitem[Taylor and Krapp(2007)]{taylor2007whatMeasureAndWhy}
G.~K. Taylor and H.~G. Krapp, ``Sensory systems and flight stability: what do
  insects measure and why?'' \emph{Advances in insect physiology}, vol.~34, pp.
  231--316, 2007.

\bibitem[Humbert and Hyslop(2009)]{humbert2009visuomotorConvergence}
J.~S. Humbert and A.~M. Hyslop, ``Bioinspired visuomotor convergence,''
  \emph{IEEE Transactions on Robotics}, vol.~26, no.~1, pp. 121--130, 2009.

\bibitem[Billah and Faruque(2020)]{billah2020multiAgentWfi}
M.~A. Billah and I.~A. Faruque, ``Bioinspired visuomotor feedback in a
  multiagent group/swarm context,'' \emph{IEEE Transactions on Robotics},
  vol.~37, no.~2, pp. 603--614, 2020.

\bibitem[Liu and Passino(2006)]{passino2006InformationFlow}
Y.~Liu and K.~M. Passino, ``Cohesive behaviors of multiagent systems with
  information flow constraints,'' \emph{IEEE Transactions on Automatic
  Control}, vol.~51, no.~11, pp. 1734--1748, 2006.

\bibitem[Yıldız and Özgüler(2015)]{bulent2015partialInfoNashEqb}
A.~Yıldız and A.~B. Özgüler, ``Partially informed agents can form a swarm
  in a nash equilibrium,'' \emph{IEEE Transactions on Automatic Control},
  vol.~60, no.~11, pp. 3089--3094, 2015.

\bibitem[Chung et~al.(2018)Chung, Paranjape, Dames, Shen, and
  Kumar]{chung2018SwarmSurveyTac}
S.-J. Chung, A.~A. Paranjape, P.~Dames, S.~Shen, and V.~Kumar, ``A survey on
  aerial swarm robotics,'' \emph{IEEE Transactions on Robotics}, vol.~34,
  no.~4, pp. 837--855, 2018.

\bibitem[Rossi et~al.(2018)Rossi, Bandyopadhyay, Wolf, and
  Pavone]{rossi2018review}
F.~Rossi, S.~Bandyopadhyay, M.~Wolf, and M.~Pavone, ``Review of multi-agent
  algorithms for collective behavior: a structural taxonomy,''
  \emph{IFAC-PapersOnLine}, vol.~51, no.~12, pp. 112--117, 2018.

\bibitem[Olfati-Saber(2006)]{olfatisaber2006Review}
R.~Olfati-Saber, ``Flocking for multi-agent dynamic systems: algorithms and
  theory,'' \emph{IEEE Transactions on Automatic Control}, vol.~51, no.~3, pp.
  401--420, 2006.

\bibitem[Reynolds(1987)]{reynolds1987flocksBoids}
C.~W. Reynolds, ``Flocks, herds and schools: A distributed behavioral model,''
  in \emph{Proceedings of the 14th annual conference on Computer graphics and
  interactive techniques}, 1987, pp. 25--34.

\bibitem[Vicsek et~al.(1995)Vicsek, Czir{\'o}k, Ben-Jacob, Cohen, and
  Shochet]{vicsek1995swarm}
T.~Vicsek, A.~Czir{\'o}k, E.~Ben-Jacob, I.~Cohen, and O.~Shochet, ``Novel type
  of phase transition in a system of self-driven particles,'' \emph{Physical
  review letters}, vol.~75, no.~6, p. 1226, 1995.

\bibitem[Cucker and Smale(2007)]{cucker2007TacEmergent}
F.~Cucker and S.~Smale, ``Emergent behavior in flocks,'' \emph{IEEE
  Transactions on automatic control}, vol.~52, no.~5, pp. 852--862, 2007.

\bibitem[Choi et~al.(2017)Choi, Ha, and Li]{Choi2017csModelAndVariants}
Y.~P. Choi, S.~Y. Ha, and Z.~Li, ``{Emergent dynamics of the cucker–Smale
  flocking model and its variants},'' in \emph{Active Particles, Volume 1 :
  Advances in Theory, Models, and Applications}, N.~Bellomo, P.~Degond, and
  E.~Tadmor, Eds.\hskip 1em plus 0.5em minus 0.4em\relax Cham: Springer
  International Publishing, 2017, vol.~1, no. 9783319499949, pp. 299--331.

\bibitem[Cucker and Mordecki(2008)]{Cucker2008mordeckiNoise}
F.~Cucker and E.~Mordecki, ``{Flocking in noisy environments},'' \emph{Journal
  des Mathematiques Pures et Appliquees}, vol.~89, no.~3, pp. 278--296, 2008.

\bibitem[Jabin and Wang(2017)]{Jabin2017meanFieldStochastic}
P.~E. Jabin and Z.~Wang, ``{Mean field limit for stochastic particle
  systems},'' \emph{Modeling and Simulation in Science, Engineering and
  Technology}, no. 9783319499949, pp. 379--402, 2017.

\bibitem[Park(1981)]{Park1981wienerProcess}
C.~Park, ``{Representations of gaussian processes by wiener processes},''
  \emph{Pacific Journal of Mathematics}, vol.~94, no.~2, pp. 407--415, 1981.

\bibitem[Tang et~al.(2019)Tang, Hu, Cui, Liao, Lao, Lin, and
  Teo]{2019yazheVisionAidedFlocking}
Y.~Tang, Y.~Hu, J.~Cui, F.~Liao, M.~Lao, F.~Lin, and R.~S.~H. Teo,
  ``Vision-aided multi-uav autonomous flocking in gps-denied environment,''
  \emph{IEEE Transactions on Industrial Electronics}, vol.~66, no.~1, pp.
  616--626, 2019.

\bibitem[Moshtagh and Jadbabaie(2007)]{moshtagh2007flockingTac}
N.~Moshtagh and A.~Jadbabaie, ``Distributed geodesic control laws for flocking
  of nonholonomic agents,'' \emph{IEEE Transactions on Automatic Control},
  vol.~52, no.~4, pp. 681--686, 2007.

\bibitem[Dachner et~al.(2022)Dachner, Wirth, Richmond, and
  Warren]{dachner2018humanVisualCoupling}
\BIBentryALTinterwordspacing
G.~C. Dachner, T.~D. Wirth, E.~Richmond, and W.~H. Warren, ``The visual
  coupling between neighbours explains local interactions underlying human
  \&\#x2018;flocking','' \emph{Proceedings of the Royal Society B: Biological
  Sciences}, vol. 289, no. 1970, p. 20212089, 2022. [Online]. Available:
  \url{https://royalsocietypublishing.org/doi/abs/10.1098/rspb.2021.2089}
\BIBentrySTDinterwordspacing

\bibitem[Asadi et~al.(2016)Asadi, Ajorlou, and
  Aghdam]{asadi2016limitedFovSwarm}
\BIBentryALTinterwordspacing
M.~M. Asadi, A.~Ajorlou, and A.~G. Aghdam, ``Distributed control of a network
  of single integrators with limited angular fields of view,''
  \emph{Automatica}, vol.~63, pp. 187--197, 2016. [Online]. Available:
  \url{https://www.sciencedirect.com/science/article/pii/S0005109815003982}
\BIBentrySTDinterwordspacing

\bibitem[Soria et~al.(2019)Soria, Schiano, and
  Floreano]{soria2019LatVisionFlocking}
E.~Soria, F.~Schiano, and D.~Floreano, ``The influence of limited visual
  sensing on the reynolds flocking algorithm,'' in \emph{2019 Third IEEE
  International Conference on Robotic Computing (IRC)}, 2019, pp. 138--145.

\bibitem[Serres and Ruffier(2017)]{serres2017opticFlowCollision}
J.~R. Serres and F.~Ruffier, ``Optic flow-based collision-free strategies: From
  insects to robots,'' \emph{Arthropod structure \& development}, vol.~46,
  no.~5, pp. 703--717, 2017.

\bibitem[Krapp and Hengstenberg(1996)]{krapp1996ofMeasurement}
H.~G. Krapp and R.~Hengstenberg, ``Estimation of self-motion by optic flow
  processing in single visual interneurons,'' \emph{Nature}, vol. 384, no.
  6608, pp. 463--466, 1996.

\bibitem[Nordstr{\"o}m and O’Carroll(2009)]{nordstrom2009stmdHypercomplex}
K.~Nordstr{\"o}m and D.~C. O’Carroll, ``Feature detection and the
  hypercomplex property in insects,'' \emph{Trends in neurosciences}, vol.~32,
  no.~7, pp. 383--391, 2009.

\bibitem[Nordstr{\"o}m(2012)]{nordstrom2012stmdsNeuralSpecializations}
K.~Nordstr{\"o}m, ``Neural specializations for small target detection in
  insects,'' \emph{Current opinion in neurobiology}, vol.~22, no.~2, pp.
  272--278, 2012.

\bibitem[Wiederman et~al.(2017)Wiederman, Fabian, Dunbier, and
  O’Carroll]{wiederman2017predictiveFocusGainModulation}
S.~D. Wiederman, J.~M. Fabian, J.~R. Dunbier, and D.~C. O’Carroll, ``A
  predictive focus of gain modulation encodes target trajectories in insect
  vision,'' \emph{Elife}, vol.~6, p. e26478, 2017.

\bibitem[Seelig and Jayaraman(2015)]{seelig2015neuralStructureCompass}
J.~D. Seelig and V.~Jayaraman, ``Neural dynamics for landmark orientation and
  angular path integration,'' \emph{Nature}, vol. 521, no. 7551, pp. 186--191,
  2015.

\bibitem[Zittrell et~al.(2020)Zittrell, Pfeiffer, and
  Homberg]{zittrell2020PolarizationSunCompassLocust}
F.~Zittrell, K.~Pfeiffer, and U.~Homberg, ``Matched-filter coding of sky
  polarization results in an internal sun compass in the brain of the desert
  locust,'' \emph{Proceedings of the National Academy of Sciences}, vol. 117,
  no.~41, pp. 25\,810--25\,817, 2020.

\bibitem[El~Jundi and Dacke(2021)]{dacke2021insectDrosophilaCompass}
B.~El~Jundi and M.~Dacke, ``Insect orientation: The drosophila wind compass
  pathway,'' \emph{Current Biology}, vol.~31, no.~2, pp. R83--R85, 2021.

\bibitem[Srinivasan(1992)]{srinivasan1992beesOpticFlow}
M.~V. Srinivasan, ``How bees exploit optic flow: behavioural experiments and
  neural models,'' \emph{Philosophical Transactions of the Royal Society of
  London. Series B: Biological Sciences}, vol. 337, no. 1281, pp. 253--259,
  1992.

\bibitem[Escobar-Alvarez et~al.(2019)Escobar-Alvarez, Ohradzansky, Keshavan,
  Ranganathan, and Humbert]{escobar2019smallObjectAvoidance}
H.~D. Escobar-Alvarez, M.~Ohradzansky, J.~Keshavan, B.~N. Ranganathan, and
  J.~S. Humbert, ``Bioinspired approaches for autonomous small-object detection
  and avoidance,'' \emph{IEEE Transactions on Robotics}, vol.~35, no.~5, pp.
  1220--1232, 2019.

\bibitem[Moshtagh et~al.()Moshtagh, Jadbabaie, and Daniilidis]{moshtaghcross}
N.~Moshtagh, A.~Jadbabaie, and K.~Daniilidis, ``Cross-product steering for
  distributed velocity alignment in multi-agent systems.''

\bibitem[Billah and Faruque(2022)]{billah2022stmdMultiAgent}
M.~A. Billah and I.~Faruque, ``The multi-agent group motions generated by
  models of insect small target detector neurons and feedback,'' in \emph{AIAA
  SCITECH 2022 Forum}, 2022, p. 0962.

\bibitem[Koenderink and van Doorn(1987)]{koenderink1987opticFlowFacts}
J.~J. Koenderink and A.~J. van Doorn, ``Facts on optic flow,'' \emph{Biological
  cybernetics}, vol.~56, no.~4, pp. 247--254, 1987.

\bibitem[Palm(2010)]{palm2010systemDynamics}
W.~J. Palm, \emph{System dynamics}.\hskip 1em plus 0.5em minus 0.4em\relax
  McGraw-Hill New York, 2010, vol.~2.

\bibitem[Sun and Lin(2015)]{Sun2015csNoiseMultPositive}
Y.~Sun and W.~Lin, ``A positive role of multiplicative noise on the emergence
  of flocking in a stochastic cucker-smale system,'' \emph{Chaos: An
  Interdisciplinary Journal of Nonlinear Science}, vol.~25, no.~8, p. 083118,
  2015.

\bibitem[Haag et~al.(2004)Haag, Denk, and Borst]{haag2004reichardtDetectors}
J.~Haag, W.~Denk, and A.~Borst, ``Fly motion vision is based on reichardt
  detectors regardless of the signal-to-noise ratio,'' \emph{Proceedings of the
  National Academy of Sciences}, vol. 101, no.~46, pp. 16\,333--16\,338, 2004.

\end{thebibliography}

\end{document}